\newtheorem{theorem}{Theorem}
\newtheorem{corollary}[theorem]{Corollary}
\newtheorem{lemma}[theorem]{Lemma}
\theoremstyle{definition}
\newtheorem{definition}[theorem]{Definition}
\newcommand{\cliquecover}{{\textsc{Edge Clique Cover}}\xspace}
\newcommand{\poly}{{\rm{poly}}}
\newcommand{\lnV}{\ell}
\newcommand{\ilnV}{\gamma}
\newcommand{\mlnV}{\delta}
\newcommand{\cstr}{\mathbf{c}}
\newcommand{\ccfam}{\ensuremath{\mathcal{C}}}
\newcommand{\Efree}{\ensuremath{E^{\textrm{free}}}}
\newcommand{\Gfree}{\ensuremath{G^{\textrm{free}}}}
\newcommand{\Eimp}{\ensuremath{E^{\textrm{imp}}}}
\newcommand{\Vars}{\mathtt{Vars}}
\newcommand{\ivar}{i}
\newcommand{\icl}{j}
\newcommand{\ibit}{c}
\newcommand{\icopy}{\eta}
\newcommand{\clause}{\Psi}
\newcommand{\ilit}{\alpha}
\newcommand{\iend}{\beta}
\newcommand{\compass}{$\textrm{NP} \subseteq \textrm{coNP}/\textrm{poly}$}
\newcommand{\defproblemu}[3]{
  \vspace{1mm}
\noindent\fbox{
  \begin{minipage}{0.95\textwidth}
  #1 \\
  {\bf{Input:}} #2  \\
  {\bf{Question:}} #3
  \end{minipage}
  }
  \vspace{1mm}
}
\newcommand{\Oh}{\ensuremath{O}}
\begin{document}

  \date{}

\author{
  Marek Cygan 
  \thanks{
    IDSIA, University of Lugano, Switzerland,
      \texttt{marek@idsia.ch}
  }
  \and
  Marcin Pilipczuk
  \thanks{
    Institute of Informatics, University of Warsaw, Poland,
      \texttt{malcin@mimuw.edu.pl}
  }
  \and
  Micha\l{} Pilipczuk 
  \thanks{
    Department of Informatics, University of Bergen, Norway, \texttt{michal.pilipczuk@ii.uib.no}
  }
}

  \title{Known algorithms for \cliquecover{} are probably optimal}

\maketitle

\begin{abstract}
In the \cliquecover{} (ECC) problem, given a graph $G$ and an integer $k$, we ask whether the edges of $G$ 
can be covered with $k$ complete subgraphs of $G$ or, equivalently, 
whether $G$ admits an intersection model on $k$-element universe. 
Gramm et al. [JEA 2008] have shown a set of simple rules that reduce the number of vertices of $G$ to $2^k$, and no algorithm is known with significantly better running time bound than a brute-force search on this reduced instance. In this paper we show that the approach of Gramm et al. is essentially optimal: we present a polynomial time algorithm that reduces an arbitrary $3$-CNF-SAT formula with $n$ variables and $m$ clauses to an equivalent ECC instance $(G,k)$ with $k = O(\log n)$ and $|V(G)| = O(n + m)$. Consequently, there is no $2^{2^{o(k)}}{\rm poly}(n)$ time algorithm for the ECC problem, unless the Exponential Time Hypothesis fails. To the best of our knowledge, these are the first results for a natural, fixed-parameter tractable problem, and proving that a doubly-exponential dependency on the parameter is essentially necessary.
\end{abstract}

\section{Introduction}

Recently, there has been an increasing interest in not only
improving the running times of exact algorithms for various NP-hard problems,
but also in finding the limits of such improvements.
{\em{Parameterized complexity}} is a very useful framework for such study:
in this approach, an instance~$x$ of a 
parameterized problem comes with an integer parameter~$k$.
We say that a problem is {\em{fixed parameter tractable}} ({\em{FPT}}) if there exists an algorithm solving any instance~$(x,k)$ in
time~$f(k) |x|^{\Oh(1)}$ for some computable function~$f$.
In other words,
 the exponential explosion of the running time, probably unavoidable for NP-hard problems,
is encapsulated in a function of a parameter.
With a wide range of possible parameters (e.g., intended solution size or structural
 graph parameters), the parameterized complexity paradigm allows much deeper insight
into the hardness of NP-hard problems than the classical instance-size measure.

Although the definition of a fixed-parameter algorithm
  allows an arbitrarily fast growing function $f$,
if $f$ turns out to be relatively small (say, single-exponential),
a fixed-parameter algorithm becomes practical for a reasonable range
of values of the parameter.
Therefore, since the dawn of parameterized complexity, 
researchers try to bound, as much as possible, the explosion of the running
time hidden in the function $f$.
In particular the goal of the part of parameterized complexity
called by Marx~\cite{marx-future} the {\em optimality program},
is to quantitatively understand what is the best possible $f$ function in the running time.

In the last few years, this trend has been complemented by a research of lower bounds,
usually conditional on the {\em{Exponential Time Hypothesis}} by Impagliazzo et al.
\cite{ImpagliazzoPZ01}. Let $c_k$ be the infimum over all positive
reals $c$ such that there exists an algorithm resolving satisfiability
of $n$-variable $k$-CNF-SAT formulae in $\Oh(2^{cn})$ time. The Exponential Time
Hypothesis (ETH) asserts that $c_3>0$ (that is, $3$-CNF-SAT formulae cannot
    be resolved in subexponential time in the number of variables), whereas
its stronger variant, {\em{Strong Exponential Time Hypothesis}} (SETH) \cite{seth-def,seth-first},
   asserts
that $\lim_{k \to \infty} c_k = 1$ (in particular, an exhaustive search
is the best possible solution for an arbitrary CNF formula).

Since the seminal paper of Impagliazzo et al. \cite{ImpagliazzoPZ01},
many ETH-based lower bounds were developed (e.g., \cite{fomin-eth,marx-eth1,marx-eth2}),
 with the prominent example
of tight bounds for $W[1]$-hard problems \cite{chkx2,chkx1}.
The most standard usage of ETH is to refute
an existence of a subexponential algorithm for some problem
by showing a linear (in the number of variables and clauses)
  reduction from an arbitrary 3-CNF-SAT formula and
  using to the so-called {\em{Sparsification Lemma}} \cite{ImpagliazzoPZ01}.
In order to prove lower bounds for different complexities (different functions $f$, say
$f(k) = 2^{\Oh(k \log k)}$), the allowed dependency of $k$ on $n$ and $m$ is more involved.
Moreover, note that an ETH-based lower bound may exclude a subexponential algorithm,
but says nothing about the possible base of the exponent in an $\Oh(c^k |x|^{\Oh(1)})$
fixed-parameter algorithm.
For such lower bounds, one needs to assume SETH
and either make the parameter independent of the number of clauses,
or carefully use {\em{Sparsification Lemma}} as in some reductions of~\cite{seth-reductions}.

In 2011, two important results of by Lokshtanov, Marx and Saurabh \cite{lms:treewidth,lms:k-to-k}
show how to overcome the aforementioned difficulties:
they developed a lower bound framework for proving ETH-based slightly superexponential lower bounds
\cite{lms:k-to-k} as well as show that some core dynamic programming routines
on graphs of bounded treewidth are optimal assuming SETH \cite{lms:treewidth}.
A significant contribution of these results is the development of new, sophisticated gadgets, which were found to be very inspiring.
For example, both research directions initiated by Lokshtanov, Marx and Saurabh lead to settling down the tight
bounds for algorithms for connectivity problems on graphs of bounded treewidth
\cite{cut-and-count}.

In this paper we make one step further in the quest of finding tight runtime bounds
for parameterized problems by presenting (to the best of our knowledge) the first known double-exponential lower bound.
Our problem of interest is an important combinatorial problem called \cliquecover{}.

\defproblemu{\cliquecover}{An undirected graph~$G$ and an integer~$k$.}
{Does there exist a set of~$k$ subgraphs of~$G$, such that
each subgraph is a clique and each edge of~$G$ is contained in at least one of these subgraphs?}

\cliquecover{} is known to be NP-complete even in very restricted graph classes~\cite{chang-muller:cc,hoover:cc,orlin:cc}
and was widely studied under a few different names: \textsc{Covering by Cliques} (GT17), \textsc{Intersection Graph Basis} (GT59)~\cite{garey-johnson} and \textsc{Keyword Conflict}~\cite{kellerman}.
It is known that the \cliquecover{} problem is equivalent to the problem of finding a representation
of a graph~$G$ as an intersection model with at most~$k$ elements in the universe \cite{cliquecover:erdos,cliquecover:book,cliquecover:appl}.
Therefore a covering of a complex real-world network by a small number of cliques may reveal its hidden structure~\cite{guillaume:cc}.
Further applications of \cliquecover{} can be found in such various areas as 
  computational geometry~\cite{cc-apl1},
  applied statistics~\cite{cc-apl2,cc-apl3}, and compiler optimization~\cite{cc-apl4}.
Due to its importance, the \cliquecover{} problem was studied from various perspectives,
including approximation upper and lower bounds~\cite{apx:cc,lund-yannakakis},
heuristics~\cite{bt:cc,cc-apl2,kellerman,kou:cc,cc-apl3,cc-apl4} and
polynomial-time algorithms for special graph classes~\cite{hoover:cc,cc-class2,cc-class1,orlin:cc}.

From the point of view of exact algorithms, a natural parameterization of \cliquecover{} by the number of cliques
was studied by Gramm et al.~\cite{gghn:cc}. The authors propose a set of simple rules
that reduce the number of vertices of the input graph to $2^k$ (the so-called {\em{kernel}}).
Currently the best known fixed-parameter algorithm for \cliquecover{} parameterized by $k$ is
a brute-force search on the $2^k$-vertex kernel, which runs in double-exponential time in terms of $k$.
Due to the importance of the \cliquecover{} problem on one hand,
    and the lack of any improvement upon
the very simple approach of Gramm et al.~\cite{gghn:cc} on the other hand,
    \cliquecover{} became a seasoned veteran of open problem sessions
(with the most recent occurrence on the last Workshop on Kernels in Vienna, 2011).
Only very recently, a superset of the current authors \cite{destylacje}
have shown that \cliquecover{} is compositional, refuting
(under the assumption $\textrm{NP} \not\subseteq \textrm{coNP}/\textrm{poly}$)
  an existence of a polynomial-time
algorithm that reduces the size of the input graph to polynomial in $k$
(the so-called {\em{polynomial kernel}}).

In this paper we complete the picture of the parameterized complexity of \cliquecover{}.
Our main technical result is the following reduction.
\begin{theorem}\label{thm:eth-reduction}
 There exists a polynomial-time algorithm that, given a 3-CNF-SAT formula with $n$ variables and $m$ clauses,
 constructs an equivalent \cliquecover{} instance $(G,k)$ with $k = \Oh(\log n)$ and
 $|V(G)| = \Oh(n+m)$.
\end{theorem}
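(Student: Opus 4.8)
The plan is to work with the well-known reformulation of \cliquecover{} as an intersection representation problem: a clique cover of $G$ with $k$ cliques is the same as an assignment $v\mapsto S(v)\subseteq\{1,\dots,k\}$ such that for every two distinct vertices $u,v$ one has $S(u)\cap S(v)\neq\emptyset$ if and only if $uv\in E(G)$ (color $i$ corresponds to the set $\{v:i\in S(v)\}$, which the condition forces to be a clique, and conversely). With $k=\Oh(\log n)$ there are $\poly(n)$ available labels, so in principle there is room to give every variable and every clause a private ``address''; the two sources of difficulty are that we are only allowed $\Oh(n+m)$ vertices, so every gadget must have constant size, and that an intersection representation is a rigid, globally coupled object, so the gadgets must be designed so that the \emph{only} freedom in \emph{any} size-$k$ representation is exactly an $n$-bit truth assignment.

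Concretely, I would fix $\ell=\Theta(\log n)$ and reserve a constant number of disjoint families of colors: ``address colors'' used to attach to variable $x_i$ a distinguished label $a_i$ coming from a balanced binary code of $i$ (so that the intersection pattern among the $a_i$'s is under control rather than arbitrary), two ``value colors'' encoding $\mathtt{true}$/$\mathtt{false}$, ``clause-address colors'' encoding clause indices, and $\Oh(1)$ auxiliary colors, for $\Oh(\log n)$ colors in total. On top of these I would build: (i) a \emph{frame} -- a linear-size family of vertices engineered so that in every size-$k$ representation all frame vertices must receive their intended labels; (ii) for each variable $x_i$ a constant-size \emph{selector} gadget whose vertex $t_i$ is, in any valid representation, pinned to one of two labels, read as the value $\phi(x_i)$; (iii) for each of the $3m$ literal occurrences a constant-size \emph{propagator} that copies $\phi(x_i)$ to the clause that needs it; and (iv) for each clause $C_j$ a constant-size \emph{checker} containing one distinguished \emph{free} edge $e_j\in\Efree$ such that the only colors that could possibly cover $e_j$ are ones that become ``active'' precisely when one of the three propagated literals is in state $\mathtt{true}$. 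Counting vertices: $\Oh(n)$ for selectors, $\Oh(m)$ for propagators and checkers, $\Oh(\log n)$ for the frame, for $\Oh(n+m)$ in total.

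The forward direction of correctness should be routine: from a satisfying assignment $\phi$ I would write the representation down explicitly -- frame vertices get their intended labels, each $t_i$ the label dictated by $\phi(x_i)$, the propagators follow suit, and for each clause a literal satisfied by $\phi$ is used to cover $e_j$ -- and then check, color by color, that this is a clique cover of size $k$ and nothing more (i.e.\ that intersecting labels correspond exactly to edges). The backward direction carries the weight: assuming $G$ admits a size-$k$ clique cover, I would first prove a \emph{rigidity lemma} that the frame forces the labels of all frame vertices, then propagate this to show each selector is pinned to one of its two states (extracting an assignment $\phi$), then that each propagator faithfully transports $\phi(x_i)$ to $C_j$, and finally that, since $e_j\in\Efree$ is covered, at least one literal of $C_j$ is true under $\phi$; hence $\phi$ satisfies the formula.

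I expect the rigidity lemma and its case analysis to be the main obstacle. Because only $\Oh(\log n)$ colors are available, the label space is merely polynomial, and a valid representation could a priori ``reuse'' colors across distant gadgets in ways the intended solution never does; the frame must therefore be built (using enough forced adjacencies and non-adjacencies -- in the spirit of the ``easy'' versus important-edge dichotomy of Gramm et al.\ -- together with the balanced code, which keeps the overlaps between different addresses controlled) so that the intended labeling is essentially the unique size-$k$ representation and every gadget's residual freedom is exactly one truth value; verifying this robustly is the technically heavy part. Granting Theorem~\ref{thm:eth-reduction}, the stated lower bound is immediate: since the reduction is polynomial, a $2^{2^{o(k)}}\poly(|V(G)|)$-time algorithm for \cliquecover{} would decide $3$-CNF-SAT in $2^{2^{o(\log n)}}\poly(n+m)=2^{n^{o(1)}}$ time, contradicting the Exponential Time Hypothesis.
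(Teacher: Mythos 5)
There is a genuine gap here: your write-up is a blueprint for what a reduction \emph{ought} to look like, but it defers exactly the content that constitutes the proof. The ``frame,'' the ``rigidity lemma,'' and the claim that each selector is ``pinned to one of two labels'' are stated as goals, not constructed or proved, and you yourself identify them as the technically heavy part. That heavy part is the entire theorem. The paper's proof lives precisely there: it builds the assignment encoder out of the cocktail party graph $H_{\lnV+1}$ (a clique on $2n$ vertices minus a perfect matching), proves structural lemmas about its twin clique covers (any maximum clique extends to a cover by $\lnV+1$ clique twins, and any cover of size $2\lnV+2$ containing $\lnV$ prescribed twins must complete to a twin cover), and then uses a delicate budget-counting argument over auxiliary vertices $u_\ilnV^\icopy$ (each needing two cliques, $2\lnV-2$ of them independent, against a budget of $4\lnV$) to force the one remaining clique twin to encode an assignment. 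None of this, nor any substitute for it, appears in your proposal, so the soundness direction is unproven.

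Beyond the omission, one concrete design decision in your plan runs into trouble. You propose two shared ``value colors'' and a constant-size selector per variable whose vertex $t_i$ receives the true-color or the false-color. In the intersection-model view, any two vertices sharing a color must be adjacent; hence all $t_i$ with $\phi(x_i)=\mathtt{true}$ would be forced pairwise adjacent, and since the graph is fixed before the assignment is known, \emph{every} pair $t_i, t_{i'}$ must be adjacent unless you can guarantee they never share a value color in any valid cover. So the selectors cannot be independent constant-size gadgets; they must form a nearly complete graph with precisely controlled non-edges. This is exactly what the cocktail party graph is: the non-edges $w_{\ivar,0}w_{\ivar,1}$ are the only freedom, and the ``value color'' is membership in the assignment clique twins, which legitimately contain one endpoint of each non-edge. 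Your balanced-code addressing idea does resurface in the paper, but only in the easy part (covering $\Efree$ with $\Oh(\log n)$ simplicial-vertex cliques via bit-vectors), not as a mechanism for rigidity. To turn your plan into a proof you would need to either reinvent the cocktail-party machinery or supply a genuinely different rigidity construction together with its case analysis; as it stands the argument is incomplete at its core.
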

The above theorem, together with a well-known result that
an existence of a subexponential (in the number of variables
and clauses) algorithm for verifying satisfiability of $3$-CNF formulae
violates ETH \cite{ImpagliazzoPZ01}, proves the following lower bound.
\begin{corollary}
Unless ETH fails, there does not exist an algorithm solving \cliquecover{}
in time $\Oh(2^{2^{o(k)}} |V(G)|^{\Oh(1)})$.
\end{corollary}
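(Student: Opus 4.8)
The corollary follows immediately from Theorem~\ref{thm:eth-reduction}: we may first delete repeated clauses, so that $m=\Oh(n^3)$; then an $\Oh(2^{2^{o(k)}}|V(G)|^{\Oh(1)})$-time algorithm for \cliquecover{}, composed with the reduction of Theorem~\ref{thm:eth-reduction}, would decide satisfiability of the formula in time $2^{2^{o(\log n)}}\cdot(n+m)^{\Oh(1)}=2^{n^{o(1)}}\cdot\poly(n)=2^{o(n)}$, contradicting ETH. So the real content is Theorem~\ref{thm:eth-reduction}, for which I now outline a proof.

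Reformulate \cliquecover{} as an intersection model, as recalled in the introduction: $(G,k)$ is a yes-instance iff each vertex $v$ can be assigned a nonempty \emph{signature} $\phi(v)\subseteq[k]$ with $uv\in E(G)\iff\phi(u)\cap\phi(v)\neq\emptyset$ (the $i$-th clique being $\{v:i\in\phi(v)\}$). The goal is thus to build, in polynomial time, a graph $G$ on $\Oh(n+m)$ vertices that admits such a $\phi$ over a universe of size $k=\Oh(\log n)$ precisely when the formula is satisfiable. Set $\ell=\lceil\log_2(n+m)\rceil=\Oh(\log n)$, so that $\ell$ bits suffice to name any variable or clause, and split $[k]$ into an $\Oh(1)$-size \emph{control} block plus blocks of total size $\Oh(\ell)$ that will hold the binary name of a variable index $i\in[n]$ or a clause index.

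The obstruction to overcome is that $\Theta(n+m)$ vertices must share only $\Oh(\log n)$ colors, so in any representation the pigeonhole principle collapses most signatures onto one another; we must nonetheless pin down the signature of every semantically meaningful vertex up to a single global color-permutation. The plan is to build a linear-size \emph{rigid backbone} from constant-size pieces, each forcing a local edge (resp.\ non-edge) to be read as ``two signatures share a prescribed color'' (resp.\ ``two signatures are disjoint''), glued along shared anchor vertices, so that after normalization every backbone vertex carries a prescribed signature --- in particular one distinguished vertex per variable $x_i$ whose signature spells out $i$ in the index block, and likewise one per clause. Onto this backbone we hang, for each variable, a constant-size gadget with two distinguished vertices whose representations are forced into exactly two states, read as the Boolean value of $x_i$; and, for each clause $\clause=\ilit_1\vee\ilit_2\vee\ilit_3$, a constant-size gadget attached to that clause's backbone vertex and \emph{wired} --- reusing the index block to address the three variables occurring in $\clause$ and the control block to read off their values --- so that the gadget admits a valid local representation iff some $\ilit_t$ is satisfied. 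Distinct wires address distinct indices, hence their addressing parts of signatures differ within the index block and the wires do not interfere; and wherever two objects must be non-adjacent yet a naive encoding would force their signatures to meet, we route around it by first solving a relaxed variant of \cliquecover{} in which the adjacency (equivalently, the covering) status of certain designated pairs is left to the solver, and then showing separately that this relaxation reduces back to plain \cliquecover{} with only $\Oh(1)$ blow-up in $k$ and in $|V(G)|$ --- this is the purpose of the $\Gfree,\Eimp,\Efree$ machinery. Correctness then splits into the two standard directions: from a satisfying assignment, write down the canonical backbone representation, put each variable gadget in the matching state, and check each clause gadget locally; conversely, from any $k$-representation, normalize via the backbone, read off the variable states, and use the clause gadgets to conclude that every clause holds. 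Finally $|V(G)|=\Oh(n+m)$ since the backbone is linear and each variable and each clause contributes an $\Oh(1)$-size gadget, $k=\Oh(\ell)=\Oh(\log n)$, and the whole construction is clearly polynomial-time.

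The hardest part is making the backbone genuinely rigid while simultaneously keeping the pairwise-intersection pattern of the $\Theta(n+m)$ gadget vertices consistent with the sparse edge set of $G$: these two demands conflict, which is precisely why a color-efficient addressing scheme and the free-pair relaxation are needed, and why the construction must pack everything into $\Oh(\log n)$ colors rather than, say, $\Oh(\log^2 n)$.
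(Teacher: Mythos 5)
Your opening paragraph is a correct and complete derivation of the corollary from Theorem~\ref{thm:eth-reduction}, and it is exactly the paper's argument: with $k=\Oh(\log n)$ and $|V(G)|=\Oh(n+m)=\poly(n)$, an $\Oh(2^{2^{o(k)}}|V(G)|^{\Oh(1)})$-time algorithm for \cliquecover{} would decide $3$-CNF-SAT in time $2^{2^{o(\log n)}}\poly(n)=2^{n^{o(1)}}\poly(n)=2^{o(n)}$, contradicting ETH (here you do not even need the Sparsification Lemma, since $k$ depends only on $n$ and the graph size is polynomial in $n$). If you are allowed to cite Theorem~\ref{thm:eth-reduction} as a black box, you could and should stop there.

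The remainder, in which you take on the burden of proving Theorem~\ref{thm:eth-reduction} yourself, has a genuine gap: it is an architectural plan in which every hard step is asserted rather than constructed. You never exhibit the ``constant-size pieces'' that would make the backbone rigid, and rigidity --- controlling the signatures of $\Theta(n+m)$ vertices with only $\Oh(\log n)$ colours --- is precisely the crux of the theorem; the paper achieves the analogous forcing not locally but globally, via the cocktail party graph $H_{\lnV+1}$ on $2n$ vertices together with the vertices $u_\ilnV$ and Lemmas~\ref{lem:extend-twin} and~\ref{lem:last-twin}, which pin any budget-respecting cover down to $\lnV$ prescribed clique twins plus one free ``assignment'' twin whose $2^{2^{\lnV}}$ possible positions encode the assignment. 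Your variable gadgets with ``two states'' and your ``wired'' clause gadgets are likewise unspecified, so neither completeness nor soundness can be checked. Finally, your ``relaxed variant in which the adjacency status of designated pairs is left to the solver'' is not what the $\Efree/\Eimp$ split in the paper does (those edges are genuinely present and must be covered; they are handled by $\Theta(\log n)$ simplicial vertices, not by an $\Oh(1)$ blow-up in $k$ as you claim), and you give no argument that your relaxation reduces back to plain \cliquecover{}. As written, the outline would equally well describe a reduction that does not exist; the content of the theorem lives exactly in the details you omit.
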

We note that ETH is not necessary to refute existence of single-exponential
algorithms for the problem.
\begin{corollary}
Unless all problems in NP are solvable in quasipolynomial time,
there does not exist an algorithm solving \cliquecover{}
in time $\Oh(2^{k^{\Oh(1)}} |V(G)|^{\Oh(1)})$.
\end{corollary}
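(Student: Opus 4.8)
The plan is to use the classical equivalence between \cliquecover{} and intersection representations: a graph $G$ has an edge clique cover of size $k$ if and only if one can assign to each vertex $v$ a set $S_v \subseteq [k]$ so that for all $u,v \in V(G)$ we have $uv \in E(G)$ exactly when $S_u \cap S_v \neq \emptyset$ (the cliques are then $C_i = \{v : i \in S_v\}$). So it suffices to build in polynomial time, from a $3$-CNF-SAT formula $\varphi$ with $n$ variables and $m$ clauses, a graph $G$ with $|V(G)| = \Oh(n+m)$ together with a bound $k = \Oh(\log n)$ such that $G$ admits such a representation on a $k$-element universe precisely when $\varphi$ is satisfiable. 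The essential slack is that for $k = \Oh(\log n)$ the universe has $2^k = \poly(n)$ subsets, just enough to give every variable and every clause of $\varphi$ its own ``address''.

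It is convenient to first reduce to a more permissive auxiliary variant of edge clique cover in which certain vertex pairs are declared \emph{free}: the covering cliques need not cover free pairs, and (symmetrically) some non-adjacent pairs may be permitted to lie in a common clique. Free pairs make gadget design much easier, since inside a gadget many incidences can be left undetermined; a final step of the reduction then removes these relaxations at the cost of only a constant-factor increase in $|V(G)|$ and $\Oh(1)$ extra coordinates. In the auxiliary instance I split the coordinate budget as $[k] = A \uplus B$ with an \emph{address block} $A$ of size $\ell = \Theta(\log n)$ and a block $B$ of $\Oh(1)$ \emph{working} coordinates. Fixing an antichain $a_1, a_2, \dots \subseteq A$ of balanced subsets (there are $\binom{\ell}{\lfloor \ell/2 \rfloor} \ge n+m$ of them for a suitable $\ell = \Theta(\log n)$), I hard-wire one such address into every vertex of the gadget for a given variable or clause; the antichain property then makes gadgets with different addresses essentially invisible to one another, as two vertices with incomparable addresses can be adjacent only if their working parts already meet, a situation we can control exactly.

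For each variable $x_i$ I would build a constant-size \emph{choice gadget}: cliques and independent sets glued along shared vertices, which in set language impose constraints of the form $S_u \cap S_v = \emptyset$ or $S_u \cap S_v \neq \emptyset$ and, with more care, forced inclusions, arranged so that the gadget's only degree of freedom in any size-$k$ representation is a single bit interpreted as the value of $x_i$. In particular the gadget exposes a ``true avatar'' $t_i$ and a ``false avatar'' $f_i$, exactly one of which is forced to use a designated working coordinate. For each clause $\Psi = \ell_1 \vee \ell_2 \vee \ell_3$ I would add a constant-size \emph{clause gadget} whose private edges can be covered within the budget if and only if at least one of the three avatars corresponding to $\ell_1,\ell_2,\ell_3$ made the satisfying choice. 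Assembling all gadgets gives $|V(G)| = \Oh(n+m)$ and $k = |A| + |B| = \Oh(\log n)$; correctness then splits into the easy direction (a satisfying assignment of $\varphi$ directly yields a representation of size $k$) and the converse (any representation of size $k$ is forced by the rigidity gadgets to come from an assignment, which must satisfy every clause since all clause gadgets are covered).

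The main obstacle, and where essentially all the technical work lies, is the \emph{forcing} argument: with only $k = \Oh(\log n)$ coordinates available the $\Oh(\log n)$ cliques are shared across all $n$ variable gadgets and $m$ clause gadgets, so unlike in ordinary NP-hardness reductions we cannot afford a fresh private coordinate per gadget. The delicate point is to rule out a representation that ``cheats'' by covering a gadget's edges using cliques intended for unrelated gadgets; the antichain address scheme is what isolates gadgets from each other, and on top of it one needs a careful counting argument bounding how many vertices can simultaneously lie in a single clique, which is what pins down each local bit. I expect that designing the choice and clause gadgets so that this global counting goes through — rather than the routine checks that the reduction runs in polynomial time and that $k = \Oh(\log n)$ — is the crux of the proof.
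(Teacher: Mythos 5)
Your overall strategy is the right one: the corollary does reduce to constructing, in polynomial time, an \cliquecover{} instance with $k=\Oh(\log n)$ equivalent to a given $3$-CNF-SAT formula (a $\Oh(2^{k^{\Oh(1)}}|V(G)|^{\Oh(1)})$ algorithm would then solve $3$-SAT in time $2^{\mathrm{polylog}(n)}$, putting all of NP in quasipolynomial time). But the reduction you sketch has a genuine gap, and one of its two load-bearing ideas is incorrect. First, the antichain address scheme does not isolate gadgets. In an intersection model adjacency is governed by \emph{intersection} of the label sets, not by containment, so incomparability of two balanced addresses $a_1,a_2\subseteq A$ buys you nothing: if $a_1\cap a_2\neq\emptyset$ the two vertices are adjacent no matter what their working parts are, and with $|A|=\Theta(\log n)$ and $n+m$ addresses needed, almost every pair of balanced addresses intersects (any two subsets of size greater than $|A|/2$ must intersect, and among subsets of size exactly $|A|/2$ only complementary pairs are disjoint). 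So far from being ``invisible to one another,'' gadgets with distinct addresses would be forced to be almost completely mutually adjacent, which destroys the independence structure your choice and clause gadgets rely on; making $n$ gadgets pairwise address-disjoint would require $\Omega(n)$ coordinates, exactly the budget you cannot afford. Second, even setting this aside, the constant-size choice and clause gadgets and the global forcing argument are never specified; you correctly identify this forcing as the crux, but a plan that defers the crux is not a proof.

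For comparison, the paper resolves the difficulty differently: instead of one small gadget per variable, it uses a single global gadget, the cocktail party graph $H_{\lnV+1}$ (a clique on $2n$ vertices minus a perfect matching), whose $2^{n}$ maximum cliques are in bijection with assignments (pick one endpoint of each removed matching edge $w_{\ivar,0}w_{\ivar,1}$). Rigidity comes not from addressing but from a structural lemma about covers of $H_{\lnV+1}$ by clique twins (Lemma~\ref{lem:last-twin}), enforced by $\lnV-1$ auxiliary vertices each of which can afford to lie in only two cliques of the cover; the clause gadgets are copies of $3K_2$ checked against the assignment clique via carefully omitted edges, and the many ``free'' edges are covered by $\Oh(\log n)$ simplicial vertices using bit-vector cliques. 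The idea your outline is missing is precisely that a single clique of the cover can carry the entire assignment at once, with its rigidity supplied by the combinatorics of covering $K_{2n}$ minus a perfect matching rather than by per-gadget addresses.
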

Note that \cliquecover, as a covering problem, can be solved
by a dynamic programming algorithm in time $2^{\Oh(|E(G)| + |V(G)|)}$.
Thus, our result shows also the kernelization hardness for \cliquecover{}.
\begin{corollary}
Unless ETH fails, there does not exist a polynomial-time algorithm that,
given an \cliquecover{} instance $(G,k)$, outputs an equivalent instance $(G',k')$
with $|V(G')| + |E(G')|$ bounded by $2^{o(k)}$.
\end{corollary}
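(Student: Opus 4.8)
The plan is to compose the reduction of Theorem~\ref{thm:eth-reduction} with the hypothetical kernelization and then decide the resulting tiny instance by brute force, contradicting ETH. Suppose, for contradiction, that there is a polynomial-time algorithm $\mathcal{A}$ that turns any \cliquecover{} instance $(G,k)$ into an equivalent instance $(G',k')$ with $|V(G')|+|E(G')| \le g(k)$, where $g(k) = 2^{o(k)}$.

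Given a $3$-CNF-SAT formula $\varphi$ with $n$ variables and $m$ clauses, first observe that we may assume $m = \Oh(n^3)$: a $3$-CNF formula has only $\Oh(n^3)$ pairwise distinct clauses, so duplicates can be discarded in polynomial time (alternatively one invokes the Sparsification Lemma and works with several sparse formulae). In particular $|\varphi| = \poly(n)$. Now apply Theorem~\ref{thm:eth-reduction} to get, in time $\poly(n)$, an equivalent \cliquecover{} instance $(G,k)$ with $k = \Oh(\log n)$ and $|V(G)| = \Oh(n+m) = \poly(n)$; then run $\mathcal{A}$ on $(G,k)$ to obtain, again in time $\poly(n)$, an equivalent instance $(G',k')$ with
\[
 |V(G')| + |E(G')| \;\le\; g(k) \;=\; 2^{o(k)} \;=\; 2^{o(\log n)} \;=\; n^{o(1)}.
\]
Finally solve $(G',k')$ with the generic dynamic-programming algorithm for \cliquecover{}, which runs in time $2^{\Oh(|V(G')|+|E(G')|)} = 2^{\Oh(n^{o(1)})} = 2^{o(n)}$.

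Chaining the equivalences ($\varphi$ satisfiable $\iff$ $(G,k)$ a yes-instance $\iff$ $(G',k')$ a yes-instance), this whole procedure decides satisfiability of $\varphi$ in time $\poly(n) + 2^{o(n)} = 2^{o(n)}$, contradicting ETH. Essentially nothing beyond Theorem~\ref{thm:eth-reduction} and the elementary $2^{\Oh(|E|+|V|)}$-time algorithm is needed, so there is no real combinatorial obstacle; the only thing to be careful about is the arithmetic in the exponents, namely that plugging $k=\Oh(\log n)$ into a $2^{o(k)}$ bound yields $n^{o(1)}$ and that $2^{\Oh(n^{o(1)})}$ is still $2^{o(n)}$ --- both because $h(n)\log n = o(n)$ whenever $h(n) \to 0$. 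It is worth flagging one point: $\mathcal{A}$ is only assumed to compress $|V(G')|+|E(G')|$, not $k'$, which is exactly why the kernel must be finished off by the size-parametrized brute force rather than by a $k'$-parametrized algorithm.
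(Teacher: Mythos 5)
Your proposal is correct and is exactly the argument the paper intends (the paper leaves it implicit, citing only Theorem~\ref{thm:eth-reduction} and the $2^{\Oh(|E(G)|+|V(G)|)}$-time dynamic programming algorithm): compose the reduction with the hypothetical compression and finish by brute force on the shrunken instance to get a $2^{o(n)}$-time algorithm for $3$-CNF-SAT. The exponent arithmetic and the remark that only the instance size, not $k'$, is compressed are both handled correctly.
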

Moreover, note that if we would like to refute a polynomial kernel for \cliquecover{},
   we need significantly weaker assumption than ETH.

\begin{corollary}\label{cor:no-poly-kernel}
Unless all problems in NP are solvable in quasipolynomial time,
there does not exist a polynomial-time algorithm that,
given an \cliquecover{} instance $(G,k)$, outputs an equivalent instance $(G',k')$
with $|V(G')| + |E(G')|$ bounded polynomially in $k$.
\end{corollary}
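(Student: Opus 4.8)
The plan is to compose the reduction of Theorem~\ref{thm:eth-reduction} with the hypothetical polynomial kernelization and then solve by brute force the tiny instance that results, thereby obtaining a quasipolynomial-time algorithm for $3$-CNF-SAT and hence for every problem in NP. So first I would assume, for contradiction, that there is a polynomial-time algorithm $\mathcal{K}$ which, given an \cliquecover{} instance $(G,k)$, outputs an equivalent \cliquecover{} instance $(G',k')$ with $|V(G')| + |E(G')| \le k^{c}$ for some fixed constant $c$. Since $3$-CNF-SAT is NP-complete and NP is closed under polynomial-time reductions, it is enough to show that under this assumption $3$-CNF-SAT admits a quasipolynomial-time algorithm.

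Next I would build the pipeline. Given a $3$-CNF-SAT formula $\varphi$ with $n$ variables and $m$ clauses, apply the algorithm of Theorem~\ref{thm:eth-reduction} to get, in polynomial time, an equivalent instance $(G,k)$ with $k = \Oh(\log n)$ and $|V(G)| = \Oh(n+m)$, and then apply $\mathcal{K}$ to $(G,k)$ to obtain, again in polynomial time, an equivalent instance $(G',k')$ with
\[
 |V(G')| + |E(G')| \ \le\ k^{c} \ =\ \Oh\bigl((\log n)^{c}\bigr).
\]
By transitivity of the ``equivalent instance'' relation, $\varphi$ is satisfiable if and only if $(G',k')$ is a \textsc{yes}-instance of \cliquecover{}.

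Finally, as observed just before Corollary~\ref{cor:no-poly-kernel}, \cliquecover{} can be decided in time $2^{\Oh(|E(G')|+|V(G')|)}$; on $(G',k')$ this is $2^{\Oh((\log n)^{c})}$, which is quasipolynomial in $n$. Chaining this with the two polynomial-time stages above --- and, for an arbitrary $L\in\mathrm{NP}$, with the polynomial-time reduction from $L$ to $3$-CNF-SAT (which keeps $n$ polynomial in the input size) --- yields a quasipolynomial-time algorithm for $L$, contradicting the hypothesis.

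I do not expect any genuine obstacle in this corollary: it is a straightforward composition, and the only point requiring (minor) care is the size bookkeeping --- checking that $k$ really stays $\Oh(\log n)$ through Theorem~\ref{thm:eth-reduction}, so that $k^{c}$ is only polylogarithmic in $n$, and that solving an instance of polylogarithmic size by brute force (or by the dynamic program mentioned above) indeed costs only quasipolynomial time. All the real difficulty is concentrated in Theorem~\ref{thm:eth-reduction}, not here.
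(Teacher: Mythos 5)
Your proposal is correct and follows exactly the intended argument: compose the reduction of Theorem~\ref{thm:eth-reduction} with the hypothetical kernelization to shrink a $3$-CNF-SAT instance to an \cliquecover{} instance of size $\Oh((\log n)^c)$, then decide it via the $2^{\Oh(|V|+|E|)}$ dynamic program in quasipolynomial time. The paper leaves this corollary unproved precisely because the argument is this straightforward composition, and your size bookkeeping (including the reduction from an arbitrary NP language to $3$-CNF-SAT) is accurate.
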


Finally, our reduction shows, that a polynomial time preprocessing routine
reducing any instance of $(G,k)$ of \cliquecover{} to size $2^{o(k)}$,
would make it possible to compress any $r$-CNF-SAT instance to $n^{o(r)}$
bits, which is not possible unless \compass, as observed by Dell and van Melkebeek~\cite{dell:kernel}.
Even more generally, we can rule out polynomial time compression to any language $L$
with the output having size $2^{o(k)}$.

\begin{corollary}\label{cor:compass}
Unless \compass, there does not exist a polynomial-time algorithm that,
given an \cliquecover{} instance $(G,k)$, outputs an equivalent instance $A(G,k) \in \Sigma^*$
of some language $L$ (i.e. $A(G,k) \in L$ iff $(G,k)$ is a YES-instance)
with $|A(G,k)| = 2^{o(k)}$.
\end{corollary}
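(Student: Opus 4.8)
The plan is to \emph{compose} the reduction of Theorem~\ref{thm:eth-reduction} with the hypothetical size-$2^{o(k)}$ compression and then invoke the ``SAT has no nontrivial sparsification'' lower bound of Dell and van Melkebeek~\cite{dell:kernel}. So assume that there are a polynomial-time algorithm $A$ and a fixed language $L$ such that for every \cliquecover{} instance $(G,k)$ we have $A(G,k) \in L$ iff $(G,k)$ is a YES-instance, and $|A(G,k)| = 2^{o(k)}$; the goal is to deduce \compass.

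Given a $3$-CNF-SAT formula $\varphi$ on $n$ variables, I would first delete repeated clauses in polynomial time, so that we may assume that $\varphi$ has $m \le \binom{2n}{3} = \Oh(n^3)$ clauses and hence bitsize polynomial in $n$. Running the algorithm of Theorem~\ref{thm:eth-reduction} produces, in polynomial time, an equivalent \cliquecover{} instance $(G,k)$ with $k \le c\log n$ for a fixed constant $c$ and $|V(G)| = \Oh(n+m)$. Composing with $A$ yields a polynomial-time algorithm mapping $\varphi$ to a string $A(G,k) \in \Sigma^\ast$ with $A(G,k) \in L$ iff $\varphi$ is satisfiable, and with $|A(G,k)| = 2^{o(k)}$. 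Since $k \le c\log n$, after absorbing the $o(\cdot)$ into a monotone function this bound becomes $2^{o(\log n)} = n^{o(1)}$. Thus we obtain a polynomial-time compression of $3$-CNF-SAT, parameterized by the number of variables $n$, into the fixed language $L$, with output length $n^{o(1)}$.

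Such a compression is in particular a one-round oracle communication protocol for $3$-CNF-SAT of cost $n^{o(1)}$: Alice computes $A(G,k)$ from $\varphi$ and sends it to the computationally unbounded oracle for $L$, who answers with a single bit that tells Alice whether $\varphi$ is satisfiable. By the main theorem of Dell and van Melkebeek~\cite{dell:kernel}, if $3$-CNF-SAT (parameterized by the number of variables) admits such a protocol of cost $\Oh(n^{3-\varepsilon})$ for some $\varepsilon > 0$, then \compass; taking $\varepsilon = 1$ and noting that $n^{o(1)} = \Oh(n^{2})$, we conclude \compass, which is exactly the claim. The only delicate point --- and hence the main obstacle --- is the bookkeeping that turns the guarantee ``$|A(G,k)| = 2^{o(k)}$'', stated in terms of the parameter $k$ of the produced instance, into the guarantee ``$n^{o(1)}$'', stated in terms of the number of variables of the source formula: this step works precisely because Theorem~\ref{thm:eth-reduction} supplies $k = \Oh(\log n)$, and it would collapse for any reduction yielding $k = \omega(\log n)$. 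Everything else is a routine chaining of polynomial-time procedures with a black-box application of~\cite{dell:kernel}; the same template, with the compressibility lower bound replaced by the appropriate kernelization or subexponential-time lower bound, also yields Corollary~\ref{cor:no-poly-kernel} and the ETH-based corollaries above.
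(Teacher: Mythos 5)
Your proposal is correct and follows essentially the same route as the paper: compose the reduction of Theorem~\ref{thm:eth-reduction} with the hypothetical $2^{o(k)}$-size compression and invoke the Dell--van Melkebeek oracle-communication lower bound. The only (immaterial) difference is that the paper first lifts the argument to $r$-CNF-SAT for general $r$, obtaining an $n^{o(r)}$-bit compression, whereas you stay with $r=3$ and observe that $2^{o(k)} = n^{o(1)} = \Oh(n^{3-\varepsilon})$ already contradicts the bound for $3$-CNF-SAT --- a mild simplification of the same argument.
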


\begin{proof}
Consider a formula of $r$-CNF-SAT $\phi$ with $n$ variables and $m$ clauses.
W.l.o.g. we may assume, that $m=O((2n)^r)$, since otherwise we may remove
repeated clauses.
Our first step is the standard reduction from $r$-CNF-SAT to $3$-CNF-SAT,
which in polynomial time outputs a formula $\phi'$ with $O(rm)=O(r(2n)^r)$ variables
and $O(rm)$ clauses.
Next, we use our reduction from Theorem~\ref{thm:eth-reduction}, to
create an instance $(G,k)$ of \cliquecover{} with $k=O(r\log n)$
and $|V(G)|=O(rm)$.
Observe, that having a compression routine for \cliquecover{} 
with $|A(G,k)|$ bounded by $2^{o(k)}$ would make
it possible to express the initial formula $\phi$
with $n^{o(r)}$ bits, which is not possible due to Dell and van Melkebeek~\cite{dell:kernel}.
\end{proof}

To the best of our knowledge, the assumption $\textrm{NP} \not\subseteq \textrm{QP}$
and the assumption $\textrm{NP} \not\subseteq \textrm{coNP}/\textrm{poly}$,
used (apart from Corollary \ref{cor:compass}) to refute a polynomial kernel for \cliquecover{} in \cite{destylacje}, are not known to be comparable.
We are not aware of any other than Corollary \ref{cor:no-poly-kernel} polynomial kernelization hardness result for fixed-parameter tractable problem
outside the framework of compositions (for more on compositions, see e.g.
\cite{bodlaender:kernel,cross-composition,dell:kernel,fortnow:kernel}).

Throughout the paper we investigate the graph we denote as $H_\lnV$, 
which is isomorphic to a clique on $2^\lnV$ vertices with a perfect matching removed,
called the cocktail party graph.
The core idea of the proof of Theorem \ref{thm:eth-reduction} is the observation
that a graph $H_\lnV$ is a hard instance for the \cliquecover{} problem, at least from the point of view
of the currently known algorithms.
Such a graph, while being immune to the reductions of Gramm et al. \cite{gghn:cc},
can be quite easily covered with $2\lnV$ cliques, and there are multiple solutions
of such size. Moreover, it is non-trivial to construct smaller clique covers for $H_\lnV$ (but they exist).

In fact, the optimum size of a clique cover of cocktail party graphs with $2n$ vertices
is proved to be $\min(k: n \le \binom{k-1}{\lceil k/2 \rceil})$ for all $n > 1$
by Gregory and Pullman~\cite{gregory-pullman}.
Moreover Chang et al. study cocktail party graphs in terms of rankwidth,
which they prove to be unbounded in case of edge clique graphs of cocktail party graphs~\cite{chang} (we refer to their work for appropriate definitions).

\paragraph{Acknowledgements.} We thank Leszek Ko\l{}odziejczyk for some enlightening
discussions on the complexity assumptions mentioned in this paper.
Moreover we thank Daniel Lokshtanov and Ton Kloks for helpful comments and discussions.

\section{Double-exponential lower bound}

This section is devoted to the proof of Theorem \ref{thm:eth-reduction}.
We start by introducing some notation.
For an undirected graph $G$, by $V(G)$ and $E(G)$ we denote its vertex-
and edge-sets respectively.
For a set $X \subseteq V(G)$, a subgraph induced by $X$ is denoted by $G[X]$.
For two sets $X,Y \subseteq V(G)$, the set $E(X,Y)$ contains all edges of $G$
that have one endpoint in $X$ and a second endpoint in $Y$.
As in this section we talk mostly about cliques,
   we allow ourselves some shortcuts in notation.
If $X$ is a subgraph or a subset of vertices of $G$, we call
$X$ {\em{a clique in}} $G$ if $X$ or $G[X]$ is a complete graph.
We also often identify a subgraph being a clique in $G$ with its vertex
set.
Moreover, for a bit-string $\mathbf{c}$, by $\overline{\mathbf{c}}$ we denote
its bitwise negation.

Recall that a graph $H_\lnV$ is defined as a clique on $2^\lnV$ vertices
with a perfect matching removed.
In Section \ref{ss:eth-clique} we analyze in details graphs $H_\lnV$.
It turns out that there is a large family of clique covers of size $2\lnV$, where
the clique cover consists of pairs of cliques, whose vertex sets are complements.
We refer to such pairs as to {\em{clique twins}} and a clique cover consisting of clique twins
is a {\em{twin clique cover}}. In particular,
given any clique $C$ of size $2^{\lnV-1}$ in the graph $H_\lnV$, we can construct
a twin clique cover of $H_\lnV$ that contains $C$.
Note that we have $2^{2^{\lnV-1}}$ such starting cliques $C$ in the graph $H_\lnV$: for each edge
of the removed perfect matching, we choose exactly one endpoint into the clique.
In our construction $\lnV = \theta(\log n)$ and
this clique $C$ encodes the assignment of the variables in the input $3$-CNF-SAT formula.

Section \ref{ss:eth-construction} contains the details of our construction.
We construct a graph $G$ with edge set $E(G)$ partitioned into two sets
$\Eimp$ and $\Efree$. The first set contains the important edges, that is, the ones that are nontrivial to cover
and play important role in the construction. The second set contains edges that are covered
for free; in the end we add simplicial vertices (i.e., with neighbourhood being a clique) to the graph $G$
to cover $\Efree$. Note that without loss of generality we can assume that a closed neighbourhood
of a non-isolated simplicial vertex is included in any optimal clique cover; however, we need to take care
that we do not cover any edge of $\Eimp$ with such a clique and that we use only $\Oh(\log n)$ simplicial vertices
(as each such vertex adds a clique to the solution).

While presenting the construction in Section \ref{ss:eth-construction}, we give informal explanations of the
role of each gadget. In Section \ref{ss:eth-completeness} we show formally how to translate a satisfying assignment of the input formula
into a clique cover of the constructed graph, whereas the reverse translation is provided in Section \ref{ss:eth-soundness}.

\subsection{Cocktail party graph}\label{ss:eth-clique}

In this section we analyze graphs $H_\lnV$ known as cocktail party graphs; see Figure~\ref{fig:clique-without-matching} for an illustration for small values of $\lnV$. Recall that for an integer $\lnV \geq 1$
the graph $H_\lnV$ is defined as a complete graph on $2^\lnV$ vertices with a perfect matching removed.
Note that a maximum clique in $H_\lnV$ has $2^{\lnV-1}$ vertices (i.e., half of all the vertices),
and contains exactly one endpoint of each non-edge of $H_\lnV$. Moreover, if $C$ is a maximum clique
in $H_\lnV$, so is its complement $V(H_\lnV) \setminus C$. This motivates the following definition.

\begin{figure}[htb]
\begin{center}
\includegraphics{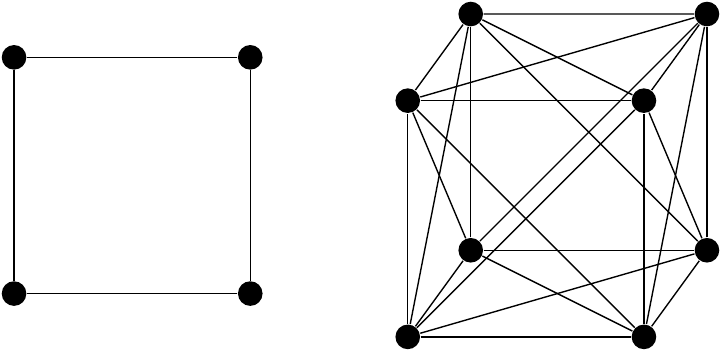}
\caption{The graphs $H_\lnV$ for $\lnV=2$ and $\lnV=3$.
  In the first case the optimum clique cover contains four two-vertex cliques and is a twin clique cover.
  In the second case an example twin clique cover is the set of all six faces of the cube; however,
     there exists a non-twin clique cover of $H_3$ of size five.}
\label{fig:clique-without-matching}
\end{center}
\end{figure}

\begin{definition}\label{def:twins}
A pair of maximum cliques $C$ and $V(H_\lnV) \setminus C$ in $H_\lnV$ is called {\em{clique twins}}.
A clique cover of $H_\lnV$ that consists of clique twins is called a {\em{twin clique cover}}.
\end{definition}

The following lemma describes structure of twin clique covers of $H_\lnV$ of size $2\lnV$ (i.e., containing
$\lnV$ twins).

\begin{lemma}\label{lem:extend-twin}
Assume we are given a set $\ccfam_0$ of $1 \leq \mlnV \leq \lnV$ clique twins with the following property:
if we choose one clique from each of the $\mlnV$ clique twins, the intersection of the vertex sets of these cliques has size exactly $2^{\lnV-\mlnV}$.
Then there exists a twin clique cover $\ccfam$ of size $2\lnV$ that contains $\ccfam_0$.
\end{lemma}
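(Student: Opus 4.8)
The plan is to work with the natural description of $H_\lnV$: we may label its vertices by the bit-strings in $\{0,1\}^\lnV$ so that the deleted perfect matching pairs each $\cstr$ with $\overline{\cstr}$, and then two vertices are adjacent precisely when they are not bitwise complementary. In this language the $\lnV$ \emph{coordinate twins} $T_i=\{C_i,\overline{C_i}\}$ with $C_i=\{\cstr:\cstr_i=0\}$ already form a twin clique cover of size $2\lnV$: for an edge $\{\cstr,\cstr'\}$ we have $\cstr\ne\overline{\cstr'}$, so $\cstr$ and $\cstr'$ agree in some coordinate $i$, and then both lie inside one clique of $T_i$. The first $\mlnV$ coordinate twins also satisfy the hypothesis of the lemma, since fixing the first $\mlnV$ bits leaves exactly $2^{\lnV-\mlnV}$ strings. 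Thus the content of the lemma is that the hypothesis forces $\ccfam_0$ to look, after relabelling the coordinates, like the first $\mlnV$ coordinate twins, so that the remaining ones complete the cover; I would establish this by induction rather than by producing the relabelling explicitly.

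Two elementary observations carry the argument. First, any vertex subset $S\subseteq V(H_\lnV)$ meeting each matching edge in exactly one endpoint is automatically a \emph{maximum} clique: the non-edges of $H_\lnV$ are exactly the matching edges, so $S$ spans no non-edge, and $|S|=2^{\lnV-1}$. Second, given twins $T_1,\dots,T_\mlnV$ with $T_j=\{C_j,\overline{C_j}\}$, associate to each vertex $v$ its \emph{signature} $\mathrm{sig}(v)\in\{0,1\}^\mlnV$ whose $j$-th coordinate is $1$ if $v\in\overline{C_j}$ and $0$ otherwise; then, for a choice $\sigma\in\{0,1\}^\mlnV$, the intersection of the selected cliques is exactly the block $A_\sigma:=\{v:\mathrm{sig}(v)=\sigma\}$. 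The $2^\mlnV$ blocks partition $V(H_\lnV)$, and since $v\in C_j\iff\overline v\in\overline{C_j}$ we have $\mathrm{sig}(\overline v)=\overline{\mathrm{sig}(v)}$, so $A_{\overline{\sigma}}$ is precisely the set of matching-partners of the vertices of $A_\sigma$ and every matching edge runs between two complementary blocks (note $\sigma\ne\overline{\sigma}$ because $\mlnV\ge1$). In this notation the hypothesis says exactly that $|A_\sigma|=2^{\lnV-\mlnV}$ for every $\sigma$.

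I would then induct on $\lnV-\mlnV$. In the base case $\mlnV=\lnV$ every block is a singleton, so $\mathrm{sig}$ is a bijection $V(H_\lnV)\to\{0,1\}^\lnV$; for an edge $\{u,v\}$ we have $\mathrm{sig}(u)\ne\mathrm{sig}(v)$ and $\mathrm{sig}(u)\ne\mathrm{sig}(\overline v)=\overline{\mathrm{sig}(v)}$, so the two strings agree in some coordinate $j$ and the edge lies inside one clique of $T_j$; hence $\ccfam_0$ is itself a twin clique cover of size $2\lnV$. For the inductive step ($\mlnV<\lnV$) I adjoin one new twin $\{C,\overline C\}$ keeping the hypothesis valid with $\mlnV+1$ twins: group the blocks into the $2^{\mlnV-1}$ complementary pairs $\{A_\sigma,A_{\overline{\sigma}}\}$, in each pair pick an arbitrary half $S\subseteq A_\sigma$ with $|S|=2^{\lnV-\mlnV-1}$, and let $C$ contain $S$ together with the matching-partners of $A_\sigma\setminus S$ (which lie in $A_{\overline{\sigma}}$). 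Then $C$ meets each matching edge in exactly one endpoint, hence is a maximum clique by the first observation, and $|C\cap A_\tau|=2^{\lnV-\mlnV-1}=2^{\lnV-(\mlnV+1)}$ for every block $A_\tau$, the same being true of $\overline C$; so $\ccfam_0\cup\{\{C,\overline C\}\}$ again satisfies the hypothesis, now for $\mlnV+1$. Moreover this new twin is distinct from every twin of $\ccfam_0$ (otherwise some block would be contained in, or disjoint from, $C$, contradicting $\mlnV<\lnV$), so after $\lnV-\mlnV$ applications we obtain $\lnV$ pairwise distinct twins, and the base case yields a twin clique cover $\ccfam\supseteq\ccfam_0$ of size $2\lnV$.

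The step I expect to need the most care is the extension: one has to check at once that the assembled set $C$ is a maximum clique of $H_\lnV$ (this is exactly where the first observation and the ``matching between complementary blocks'' structure are used), that it halves \emph{every} block $A_\tau$ and not merely the ones appearing in its construction, and that no twin of $\ccfam_0$ has been reused. The remaining ingredients — the base case and the two structural observations — are short computations in the bit-string model.
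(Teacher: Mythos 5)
Your proof is correct and is essentially the paper's argument in a different packaging: the paper extends the $\mlnV$-bit block labels to a full $\lnV$-bit labeling (with matched vertices receiving complementary labels) in one step and then takes the $\lnV$ coordinate twins, which is exactly what your induction does one bit at a time, your ``halve every block compatibly with the matching'' step being the existence proof for the paper's label extension. Your base case ($\mlnV=\lnV$ forces the twins to cover everything) coincides with the paper's final verification that the coordinate cliques $C_{\ilnV,c}$ cover all edges, so no further comparison is needed.
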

\begin{proof}
We arbitrarily number the clique twins from $\ccfam_0$ with numbers from $1$ to $\mlnV$, and in each clique twin we distinguish
one clique labeled $0$ and one labeled $1$. Then, to each vertex $v \in V(H_\lnV)$, we assign a $\mlnV$-bit string
$\cstr_v$ that on a position $\ilnV$ ($1 \leq \ilnV \leq \mlnV$) contains the bit assigned to the clique that contains $v$
from the $\ilnV$-th clique twins. Note that, as all subgraphs in $\ccfam_0$ are cliques, for any non-edge $uv$ of $H_\lnV$,
the strings $\cstr_u$ and $\cstr_v$ are bitwise negations.

Fix a $\mlnV$-bit string $\cstr$ and consider $X_\cstr = \{v \in V(H_\lnV): \cstr_v = \cstr\}$.
Note that any clique in $\ccfam_0$ contains entire $X_\cstr$ or entire $X_{\overline{\cstr}}$ and,
by the assumptions of the lemma, $|X_\cstr| = 2^{\lnV-\mlnV}$.
Moreover, as for a non-edge $uv$ the strings $\cstr_u$ and $\cstr_v$ are bitwise negations,
  each non-edge of $H_\lnV$ connects a vertex from $X_\cstr$, for some $\cstr$,
  with a vertex from $X_{\overline{\cstr}}$.
We can now label each vertex $v \in X_\cstr$ with bit string $\cstr_v'$ of length $(\lnV-\mlnV)$, such that
in $X_\cstr$ each vertex receives a different label, and if $uv$ is a non-edge of $H_\lnV$, then
$\cstr_u' = \overline{\cstr_v'}$. In this manner each vertex $v \in V(H_\lnV)$ receives a unique
$\lnV$-bit label $\cstr_v\cstr_v'$ and for any non-edge $uv$ of $H_\lnV$ we have $\cstr_u\cstr_u' = \overline{\cstr_v\cstr_v'}$.

For an integer $1 \leq \ilnV \leq \lnV$ and a bit $c \in \{0,1\}$, consider a set $C_{\ilnV,c}$
consisting of those vertices of $H_\lnV$ whose aforementioned $\lnV$-bit labels have $\ilnV$-th bit set to $c$.
As in $H_\lnV$ a vertex $v$ is connected with all other vertices except the one labeled with the bitwise negation
of the label of $v$, $C_{\ilnV,c}$ induces a clique.
Moreover, for any edge $uv \in E(H_\lnV)$, the labels of $u$ and $v$ agree on at least one bit,
  and the corresponding clique $C_{\ilnV,c}$ contains the edge $uv$.
As $C_{\ilnV,0} = V(H_\lnV) \setminus C_{\ilnV,1}$, we infer that
the family $\ccfam = \{C_{\ilnV,c}: 1 \leq \ilnV \leq \lnV, c \in \{0,1\}\}$ is a twin clique cover of $H_\lnV$
of size $2\lnV$. We finish the proof of the lemma by noticing that $\{C_{\ilnV,c} : 1 \leq \ilnV \leq \mlnV, c \in \{0,1\}\} = \ccfam_0$.
\end{proof}

Note that the above lemma for $\mlnV=1$ implies that for any maximum clique $C$ in $H_\lnV$ there exists a twin clique cover of size $2\lnV$ that contains $C$. The next lemma treats about optimum twin clique covers of $H_\lnV$.

\begin{lemma}\label{lem:last-twin}
Let $\ccfam$ be a clique cover of $H_\lnV$ that contains at least $\lnV-1$ clique twins.
Then $|\ccfam| \geq 2\lnV$ and, if $|\ccfam| = 2\lnV$, then $\ccfam$ is a twin clique cover of $H_\lnV$.
\end{lemma}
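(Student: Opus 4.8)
The plan is to argue about the $2^{\lnV-1}$ non-edges of $H_\lnV$ directly, tracking which cliques of $\ccfam$ cover each non-edge's incident edges. Suppose $\ccfam$ contains $\lnV-1$ clique twins; call them $T_1,\dots,T_{\lnV-1}$, where $T_\ilnV = \{C_\ilnV, \overline{C_\ilnV}\}$ with $\overline{C_\ilnV} = V(H_\lnV)\setminus C_\ilnV$, and let $R = \ccfam \setminus \bigcup_\ilnV T_\ilnV$ be the remaining cliques. As in the proof of Lemma~\ref{lem:extend-twin}, label each vertex $v$ by the $(\lnV-1)$-bit string $\cstr_v$ recording, for each $\ilnV$, which side of $T_\ilnV$ it lies in; for any non-edge $uv$ we have $\cstr_u = \overline{\cstr_v}$. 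For a string $\cstr$ put $X_\cstr = \{v : \cstr_v = \cstr\}$. The key point is that the cliques in $\bigcup_\ilnV T_\ilnV$ cover \emph{every} edge between two vertices whose labels are not bitwise-complementary, but cover \emph{no} edge inside any $X_\cstr$ and \emph{no} edge of $E(X_\cstr, X_{\overline{\cstr}})$. Hence all edges that must still be covered by $R$ lie inside the induced subgraphs $H_\lnV[X_\cstr \cup X_{\overline{\cstr}}]$; since each non-edge joins $X_\cstr$ with $X_{\overline{\cstr}}$, each such subgraph is itself (isomorphic to) a smaller cocktail party graph $H_{\lnV'}$ where $2^{\lnV'} = |X_\cstr| + |X_{\overline{\cstr}}|$, and these subgraphs are vertex-disjoint across the $2^{\lnV-2}$ complementary pairs $\{\cstr,\overline{\cstr}\}$.

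The first step is therefore a counting/pigeonhole argument: since $\sum_\cstr |X_\cstr| = 2^\lnV$ and there are $2^{\lnV-1}$ strings $\cstr$, at least one complementary pair $\{\cstr_0,\overline{\cstr_0}\}$ has $|X_{\cstr_0}| + |X_{\overline{\cstr_0}}| \geq 4$, and in fact I expect to show that the induced cocktail party graph on this pair needs at least two cliques from $R$ just to cover it (a cocktail party graph on $\geq 4$ vertices has a non-edge, and a single clique cannot cover both edges incident to a non-edge's two endpoints that leave toward the partner side — more carefully, two cliques are needed since one clique covering all of $X_{\cstr_0}$ misses $X_{\overline{\cstr_0}}$ and vice versa, while a clique meeting both must avoid the non-edges). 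Combined with the $\lnV-1$ twins this already gives $|R|\geq 2$, hence $|\ccfam| \geq 2(\lnV-1) + 2 = 2\lnV$.

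For the equality case, assume $|\ccfam| = 2\lnV$, so $|R| = 2$. Then $R$ must cover \emph{all} remaining edges — that is, $\bigcup_\cstr H_\lnV[X_\cstr\cup X_{\overline\cstr}]$ — using only two cliques $D_1, D_2$. The plan is to show first that every $X_\cstr$ is nonempty: if some $X_\cstr = \emptyset$ while $X_{\overline\cstr}\neq\emptyset$, I will derive a contradiction by a parity/count argument on $\sum|X_\cstr| = 2^\lnV$ against the structure forced below; more robustly, I will show that if all $X_\cstr$ have size exactly $2^{\lnV-(\lnV-1)} = 2$ then the $\lnV-1$ twins satisfy the hypothesis of Lemma~\ref{lem:extend-twin} with $\mlnV = \lnV-1$, which produces the missing twin $T_\lnV = \{C_\lnV,\overline{C_\lnV}\}$ covering exactly the residual edges with two cliques whose union is a twin; then I must argue $\{D_1,D_2\}$ must itself be such a twin, i.e.\ $D_2 = V(H_\lnV)\setminus D_1$. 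For the latter: each non-edge $uv$ (with $u\in X_\cstr$, $v\in X_{\overline\cstr}$, $X_\cstr = \{u,u'\}$, $X_{\overline\cstr} = \{v,v'\}$) has its four residual edges $uv', u'v, uu', vv'$ covered by $D_1\cup D_2$; since $u,v$ is a non-edge no clique contains both, and checking the $2\times 2$ block forces $\{D_1\cap(X_\cstr\cup X_{\overline\cstr}), D_2\cap(\cdots)\}$ to be $\{ \{u,u'\},\{v,v'\} \}$ or $\{\{u,v'\},\{u',v\}\}$ — in either case complementary within the block. Propagating this across all blocks (they are vertex-disjoint and exhaust $V(H_\lnV)$) yields $D_2 = V(H_\lnV)\setminus D_1$, so $\{D_1,D_2\}$ is a clique twin and $\ccfam$ is a twin clique cover.

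The main obstacle I anticipate is the case $|R|=2$ with the residual subgraphs \emph{not} all of size $4$: a priori one complementary pair could have $|X_{\cstr}|+|X_{\overline\cstr}| = 8$ (or larger) while others are empty. I will need to rule this out by showing a cocktail party graph on $\geq 6$ vertices cannot be covered by two cliques (equivalently $2\lnV' > 2$ for $\lnV'\geq 2$, which is immediate from the first part of this very lemma applied inductively, or directly: three pairwise non-adjacent vertices force three distinct cliques). This inductive/self-referential use of the lemma's first conclusion on the smaller cocktail party graphs $H_{\lnV'}$ sitting inside $H_\lnV$ is the delicate bookkeeping point, but once the residual graph is pinned to be a disjoint union of $H_1$'s and at most one small piece, the rest is the finite block analysis sketched above.
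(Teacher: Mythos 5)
Your proposal follows essentially the same route as the paper: label the vertices by the $(\lnV-1)$-bit string recording their side in each twin, decompose the uncovered edges into the blocks $X_\cstr\cup X_{\overline{\cstr}}$, rule out blocks with $|X_\cstr|\geq 3$ (the paper's ``direct check'' that such a block needs three extra cliques), conclude by counting that every $|X_\cstr|=2$, and finish with the $4$-cycle analysis forcing the two remaining cliques to be complementary. The outline, the key decomposition, and the final conclusion all match the published argument.

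There is, however, one recurring error you should fix, because read literally it makes your equality-case analysis collapse. You assert that the twin cliques ``cover no edge inside any $X_\cstr$''; in fact each twin contains one clique that includes \emph{all} of $X_\cstr$, so every edge inside $X_\cstr$ \emph{is} covered by $\ccfam_0$, and the only edges left for $R$ are the cross edges $E(X_\cstr,X_{\overline{\cstr}})$. Consequently, in a block $\{u,u'\}\cup\{v,v'\}$ with non-edges $uv$ and $u'v'$, the residual edges are only $uv'$ and $u'v$ --- not the four edges $uv',u'v,uu',vv'$ you list. If $D_1\cup D_2$ really had to cover all four, the task would be impossible (covering $uv'$ and $u'v$ forces the traces $\{u,v'\}$ and $\{u',v\}$, which miss $uu'$ and $vv'$), so the equality case would be vacuous rather than proved. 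Once corrected, only your second option $\{\{u,v'\},\{u',v\}\}$ survives (your first option $\{\{u,u'\},\{v,v'\}\}$ covers neither residual edge), and the complementarity conclusion goes through as in the paper. Two smaller slips: $|X_\cstr|=|X_{\overline{\cstr}}|$ always holds (non-edges pair the two sets bijectively), so the case ``$X_\cstr=\emptyset$ while $X_{\overline{\cstr}}\neq\emptyset$'' cannot occur; and a cocktail party graph has no three pairwise non-adjacent vertices (its independence number is $2$), so that justification for the $|X_\cstr|\geq 3$ case must be replaced, e.g., by exhibiting three cross edges no two of which fit in a common clique.
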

\begin{proof}
Let $\ccfam_0 \subseteq \ccfam$ be a set of $2\lnV-2$ cliques that form the assumed $\lnV-1$ clique twins.
We use the family $\ccfam_0$ to label the vertices of $H_\lnV$ with $(\lnV-1)$-bit labels similarly as
in the proof of Lemma \ref{lem:extend-twin}.
That is, we arbitrarily number these clique twins with numbers $1$ to $\lnV-1$, and in each clique twin we distinguish
one clique labeled $0$ and one labeled $1$; a string $\cstr_v$ for $v \in V(H_\lnV)$ consists of $(\lnV-1)$
bits assigned to the cliques containing $v$. Again, for any non-edge $uv$ of $H_\lnV$, the strings $\cstr_u$ and $\cstr_v$ are bitwise negations.

Fix a $(\lnV-1)$-bit string $\cstr$ and consider $X_\cstr = \{v \in V(H_\lnV): \cstr_v = \cstr\}$. Note
that any clique in $\ccfam_0$ contains entire $X_\cstr$ or entire $X_{\overline{\cstr}}$ and thus
no clique in $\ccfam_0$ covers the edges of $E(X_\cstr, X_{\overline{\cstr}})$.
Moreover, as for any non-edge $uv$ we have $\cstr_u = \overline{\cstr_v}$, the sets $X_\cstr$
and $X_{\overline{\cstr}}$ are of equal size.

As $\ccfam$ is a clique cover of $H_\lnV$, $\ccfam \setminus \ccfam_0$ covers $E(X_\cstr, X_{\overline{\cstr}})$. As
$H_\lnV[X_\cstr \cup X_{\overline{\cstr}}]$ is isomorphic to $K_{2|X_\cstr|}$ with a perfect matching removed,
a direct check shows that if $|X_\cstr| \geq 3$ then $|\ccfam \setminus \ccfam_0| \geq 3$, that is,
  we need at least three cliques to cover $E(X_\cstr, X_{\overline{\cstr}})$.
Thus, if $|\ccfam| \leq 2\lnV$, for each string $\cstr$ we have $|X_\cstr| \leq 2$.
As there are $2^{\lnV-1}$ bit strings $\cstr$, and $2^\lnV$ vertices of $H_\lnV$,
we infer that in this case $|X_\cstr| = 2$ for each bit string $\cstr$.

Fix a bit string $\cstr$. If $|X_\cstr| = 2$, then the graph $H_\lnV[X_\cstr \cup X_{\overline{\cstr}}]$ is isomorphic
to a $4$-cycle and $E(X_\cstr, X_{\overline{\cstr}})$ contains two opposite edges of this cycle.
These edges cannot be covered with a single clique. We infer that $|\ccfam \setminus \ccfam_0| \geq 2$, i.e.,
$|\ccfam| \geq 2\lnV$. Assume now that $|\ccfam| = 2\lnV$ and let $\ccfam \setminus \ccfam_0 = \{C,C'\}$.
Note that for any bit string $\cstr$ the clique $C$ contains both endpoints
of one edge of $E(X_\cstr,X_{\overline{\cstr}})$, and $C'$ contains the endpoints of the second edge.
Therefore $C = V(H_\lnV) \setminus C'$
and the lemma is proven.
\end{proof}

Let us remark that Lemma~\ref{lem:last-twin} implies that one cannot cover the graph $H_\ell$ with less than $\ell$ clique twins, i.e., the bound given by Lemma~\ref{lem:extend-twin}. Indeed, assume that there exists a cover of $H_\ell$ using $\ell'<\ell$ clique twins. If necessary, copy some of the clique twins in order to obtain a cover that uses exactly $\ell-1$ twins. However, from Lemma~\ref{lem:last-twin} we infer that this cover needs to contain in fact more cliques, a contradiction.

\subsection{Construction}\label{ss:eth-construction}

Recall that, given a $3$-CNF-SAT formula $\Phi$, we are to construct an equivalent \cliquecover{} instance with the number of cliques bounded
logarithmically in the number of variables of $\Phi$.
We start with an empty graph $G$, and we subsequently add new gadgets to $G$. Recall that the edge set of $G$ is partitioned into $\Efree$ and $\Eimp$;
at the end of this section we show how to cover the set $\Efree$ with a small number of cliques, each induced by a closed neighbourhood of a simplicial vertex.
We refer to Figure \ref{fig:construction} for an illustration of the construction.
\begin{figure}
\begin{center}
\includegraphics{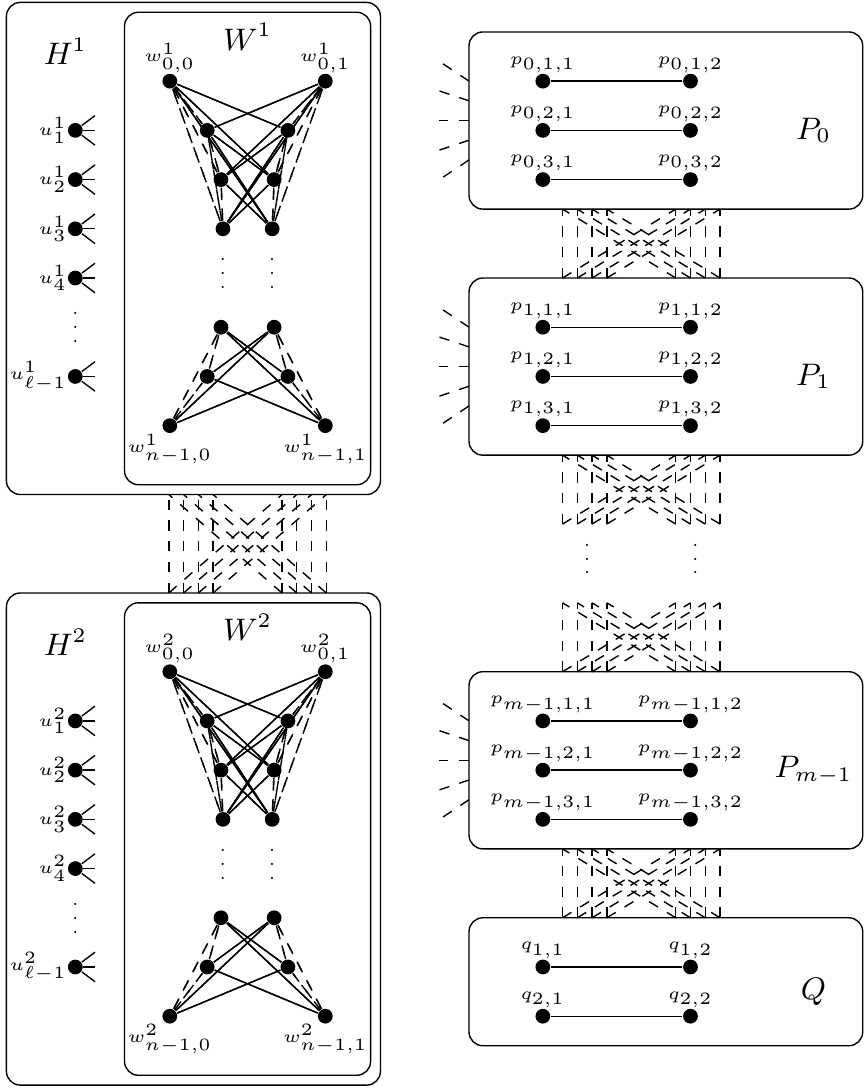}
\caption{Illustration of the construction of the graph $G$. Solid edges belong to the set $\Eimp$ and dashed edges to $\Efree$.
  The simplicial vertices are not illustrated, nor are the details which edges between
    the clause gadgets $P_\icl$ and the assignment gadgets $H^\icopy$ are present in the graph $G$.
    Moreover not all edges of $\Efree$ between the vertices of gadgets $Q$ and $P_j$ are shown.
\vspace{-0.8cm}}
  \label{fig:construction}
  \end{center}
  \end{figure}

\subsubsection{Preprocessing of the formula $\Phi$} Let $\Vars$ denote the set of variables of $\Phi$.
By standard arguments, we can assume that in $\Phi$ each clause consists of exactly $3$ literals, these literals
contain different variables and no clause appears more than once.
Moreover, we perform the following two regularization operations on $\Phi$. First, we introduce some dummy variables into $\Vars$, the set of variables of $\Phi$, so that
the number of variables is a power of two, and that there exists at least one dummy variable (i.e., the variable that does not appear in any clause). This operation at most
doubles the number of variables in $\Vars$. Second, we ensure that if $\Phi$ is satisfiable, then there exists a satisfying assignment of $\Phi$ that assigns true to exactly half of the
variables, and false to the other half. This can be done by transforming $\Phi$ into $\Phi'=\Phi\wedge \overline{\Phi}$, where $\overline{\Phi}$ is a copy of $\Phi$ on duplicated set of variables $\overline{\Vars}$ and, moreover, all the literals in $\overline{\Phi}$ are replaced with their negations. Clearly, if $\Phi'$ is satisfiable then the same assignment satisfies $\Phi$ in particular. Moreover, note that any satisfying assignment of $\Phi$ can be extended to a satisfying assignment of $\Phi'$ by assigning each copy of a variable the negation of the value of the original; this assignment assigns true to exactly half of the variables, and false to the other half.
Observe that the second operation does not change the properties ensured by the first operation, as it exactly doubles the number of variables. Moreover, in the satisfying assignment we can fix value of one dummy variable in $\Vars$.

After performing the described operations, let $n$ be the number of variables in $\Vars$, $m$ be the number of clauses of $\Phi$, and $n = 2^\lnV$. Note that $m = \Oh(n^3)$ and $\log m = \Oh(\log n) = \Oh(\lnV)$.

\subsubsection{Assignment-encoding gadget}
We assume that $\Vars = \{x_0,x_1,\ldots,x_{n-1}\}$ and that the $0$-th variable is a dummy one (it serves in the construction as a true pattern).
Take a graph $H$ isomorphic to $H_{\lnV+1}$ and denote its vertices by $w_{\ivar,\ibit}$ for $0 \leq \ivar < n$, $\ibit \in \{0,1\}$;
the non-edges of $H_{\lnV+1}$ are $\{w_{\ivar,0}w_{\ivar,1}: 0 \leq \ivar < n\}$.
Let $W_\ibit = \{w_{\ivar,\ibit}: 0 \leq \ivar < n\}$ and $W = W_0 \cup W_1$. We put the edges of $H[W_0]$ and $H[W_1]$ into $\Efree$ and $E(W_0,W_1)$ into $\Eimp$.

Moreover, we add $(\lnV-1)$ vertices $u_\ilnV$, $1 \leq \ilnV < \lnV$, to the graph $H$. Each vertex $u_\ilnV$ is connected to all vertices of $W$ via edges belonging to $\Eimp$.
This finishes the description of the assignment-encoding gadget $H$. We add two copies of the gadget $H$ to the graph $G$, and denote the vertices of the $\icopy$-th copy ($\icopy \in \{1,2\}$)
by $w_{\ivar,\ibit}^\icopy$ and $u_\ilnV^\icopy$. We define $W^\icopy$, $W_0^\icopy$, $W_1^\icopy$ and $H^\icopy$ in the natural way.
In the graph $G$, for all indices $(\ivar,\ibit,\ivar',\ibit')\in (\{0,1,\ldots,n-1\}\times \{0,1\})^2$ we connect each pair of vertices $w_{\ivar,\ibit}^1$ and $w_{\ivar',\ibit'}^2$ with an edge from $\Efree$, i.e., we introduce a complete bipartite graph with edges belonging to $\Efree$ between $W^1$ and $W^2$.

Let us now describe the intuition behind this construction. In the gadget $H$, the neighbourhood of each vertex $u_\ilnV$ is not a clique, thus any clique cover
of $H$ needs to include at least two cliques that contain $u_\ilnV$. However, if we are allowed to use only two cliques per vertex $u_\ilnV$,
these cliques need to induce clique twins in the subgraph $H_{\lnV+1}$ of $H$. With the assumption of only two cliques per vertex $u_\ilnV$,
the set of $(\lnV-1)$ vertices $u_\ilnV$ ensures that when covering $H_{\lnV+1}$ we use $\lnV$ clique twins: $(\lnV-1)$ from the vertices $u_\ilnV$ and one given by the edges in $\Efree$ (cliques $W_0$ and $W_1$).
Lemma \ref{lem:last-twin} asserts that the optimal way to complete a clique cover of $H_{\lnV+1}$ is to use one more pair of clique twins: this clique twins, called {\em{the assignment
 clique twins}}, encode the assignment
(and, as they are not bounded by the vertices $u_\ilnV$, they can be used to verify the assignment in the forthcoming gadgets).
Finally, we need two copies of the gadget $H$, as in the soundness proof we have one free clique that spoils the aforementioned argument; however, as the vertices $\{u_\ilnV^\icopy: 1 \leq \ilnV < \lnV, \icopy \in \{1,2\}\}$ form an independent set, it cannot spoil it in both copies at once. The edges between the sets $W$ in the copies allow us to use the same
two cliques as the assignment clique twins in both copies of the gadget $H$.

One could ask why we put edges from $H[W_0]$ and $H[W_1]$ into $\Efree$, since in the previous section we have assumed that
all the edges of $H_{\lnV}$ are to be covered.
The reason for this is that additional cliques with several vertices from $W_0$ or $W_1$ will appear,
in order to cover other edges of $\Efree$ and for this reason we need to put edges from $H[W_0]$ and $H[W_1]$ into $\Efree$
and carefully investigate cliques that cover those edges.


\subsubsection{Clause gadgets}
We now introduce gadgets that verify correctness of the assignment encoded by the assignment clique twins, described in the previous paragraphs.

First, let us extend our notation.
Let $\Phi = \clause_0 \wedge \clause_1 \wedge \ldots \wedge \clause_{m-1}$ and for integers $\icl,\ilit$ let $\ivar(\icl,\ilit)$ be the index of the variable that appears in $\ilit$-th literal
in the clause $\clause_\icl$. Moreover, let $\ibit(\icl,\ilit) = 0$ if the $\ilit$-th literal of $\clause_\icl$ is negative (i.e., $\neg x_{\ivar(\icl,\ilit)}$)
  and $\ibit(\icl,\ilit)=1$ otherwise.

For each clause $\clause_\icl$, we introduce into $G$ a subgraph $P_\icl$ isomorphic to $3K_2$, that is, $V(P_\icl) = \{p_{\icl,\ilit,\iend} : 1 \leq \ilit \leq 3, \iend \in \{1,2\}\}$
and $E(P_\icl) = \{p_{\icl,\ilit,1}p_{\icl,\ilit,2}: 1 \leq \ilit \leq 3\}$. Moreover, we introduce into $G$ a guard subgraph $Q$ isomorphic to $2K_2$, that is,
$V(Q) = \{q_{1,1}, q_{1,2}, q_{2,1}, q_{2,2}\}$ and $E(Q) = \{q_{1,1}q_{1,2}, q_{2,1}q_{2,2}\}$.

All the edges in all gadgets $P_\icl$ and $Q$ belong to $\Eimp$. Moreover, we introduce the following edges to $\Efree$.
First, for each vertex of $Q$, we connect it with all vertices of all subgraphs $P_\icl$.
Second, we connect each vertex $p_{\icl,\ilit,\iend}$ with all vertices
$p_{\icl',\ilit',\iend'}$ for $\icl' \neq \icl$. Third, we connect each vertex $p_{\icl,\ilit,\iend}$ with all vertices in the sets $W$ in both copies of the gadget $H$,
except for $w^\icopy_{0,1}$ and $w^\icopy_{\ivar(\icl,\ilit),1-\ibit(\icl,\ilit)}$ for $\icopy\in \{1,2\}$.
This finishes the description of the clause gadgets.

Let us now describe the intuition behind this construction. In each gadget $P_\icl$ and in the guard subgraph $Q$ the edges are independent, thus
they need to be covered by different cliques. Two cliques are used to cover the edges of $Q$, and they can cover two out of three edges from each 
of the clause gadget $P_\icl$. The third one needs to be covered by the assignment clique twins from the gadgets $H$ (as the gadgets $P_\icl$ are not adjacent
to the vertices $u^\icopy_\ilnV$), and it corresponds to the choice which literal satisfies the clause $\clause_\icl$.
The missing edge $p_{\icl,\ilit,\iend}x_{0,1}$ ensures that only one clique of the assignment clique twins is used to cover the edges of $P_\icl$.
Finally, the missing edge $p_{\icl,\ilit,\iend}w^\icopy_{\ivar(\icl,\ilit),1-\ibit(\icl,\ilit)}$ verifies that this clique encodes a satisfying assignment of $\Phi$.

We note that it is non-trivial to cover the edges of $\Efree$ with $\Oh(\lnV) = \Oh(\log n)$ cliques induced by closed neighbourhoods of simplicial vertices. 
This is done in the next sections by appropriately using bit-vectors.

\subsubsection{Budget}

We set the number of cliques to cover the edges of $\Eimp$ as $k_0 = 2 \cdot 2 \cdot (\lnV-1) + 2 + 2 = 4\lnV$ --- two for each vertex $u_\ilnV^\icopy$, two
for the assignment clique twins in $H$, and two for the cliques that contain the edges of $Q$. The final number of cliques $k$ is the sum of $k_0$ and the number
of simplicial vertices introduced in the next section.

\subsubsection{Covering the free edges}

In this section we show that the edges of $\Efree$ can be covered by a small number of cliques, without accidentally covering any edge of $\Eimp$.
Moreover, such covering can be constructed in polynomial time. To construct the final \cliquecover{} instance, for each clique of this clique cover
we introduce a simplicial vertex adjacent to the vertex set of this clique, and raise the desired number of cliques by one.

\begin{lemma}\label{lem:cover-free}
The graph $\Gfree = (V(G),\Efree)$ admits a clique cover of size $46 + 36\lceil \log m \rceil + 24\lnV = \Oh(\log n)$. Moreover, such a clique cover can be constructed in polynomial time.
\end{lemma}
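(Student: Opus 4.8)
The goal is to exhibit an explicit clique cover of $\Gfree$ using $\Oh(\log n)$ cliques, partitioned according to which kinds of free edges need covering. Recall the free edges fall into several groups: (a) the edges inside $H[W_0^\icopy]$ and $H[W_1^\icopy]$ for $\icopy\in\{1,2\}$; (b) the complete bipartite graph between $W^1$ and $W^2$; (c) from each vertex of $Q$ to all vertices of all $P_\icl$; (d) between $p_{\icl,\ilit,\iend}$ and $p_{\icl',\ilit',\iend'}$ for $\icl\neq\icl'$; (e) between each $p_{\icl,\ilit,\iend}$ and the vertices of $W^\icopy$, with two exceptions ($w^\icopy_{0,1}$ and $w^\icopy_{\ivar(\icl,\ilit),1-\ibit(\icl,\ilit)}$). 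I would treat each group with its own family of cliques, each family of size $\Oh(\log n)$, and then sum up. The crucial constraint throughout is that no clique may contain both endpoints of an edge of $\Eimp$ — so when I build a clique I must verify it avoids the non-edges of the relevant cocktail-party graph, avoids picking two vertices of the same $P_\icl$ or same pair in $Q$, and avoids picking any $u_\ilnV^\icopy$ together with a $W$-vertex, etc.

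\textbf{Key steps.} First I would handle the large homogeneous pieces using a bit-vector trick: the edges within $W_0^\icopy$ (a clique on $n=2^\lnV$ vertices) are covered by $2\lnV$ cliques $\{w_{\ivar,0}^\icopy : \text{bit } t \text{ of } \ivar \text{ equals } c\}$ for $t\in[\lnV]$, $c\in\{0,1\}$ — but I must be careful not to include edges of $\Eimp$; since within a single $W_0^\icopy$ all edges are free, any subset works, so this costs $4\lnV$ over both copies and both bits. For the complete bipartite graph between $W^1$ and $W^2$ (group (b)), a single clique $W^1\cup W^2$ would cover the non-edges $w^1_{\ivar,0}w^1_{\ivar,1}$ which lie in $\Eimp$, so instead I split: for each bit $t$ of $\ivar$ take $\{w_{\ivar,\ibit}^1 : \text{bit } t(\ivar)=0\}\cup\{w_{\ivar',\ibit'}^2 : \ldots\}$ — actually the clean approach is to cover $E(W^1,W^2)$ by cliques of the form $A\cup B$ where $A\subseteq W^1$, $B\subseteq W^2$ are each "halves" chosen by a bit of the index, so that every cross pair is caught while every within-side non-edge is broken; $\Oh(\lnV)$ such cliques suffice. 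For group (d), the pairwise-across-clause-gadget edges among the $6m$ vertices $p_{\icl,\ilit,\iend}$, I index each such vertex by $(\icl,\ilit,\iend)$ and cover by bit-vectors on $\icl\in\{0,\ldots,m-1\}$: a clique $\{p_{\icl,\ilit,\iend} : \text{bit } t(\icl)=c\}$ is a clique in $\Gfree$ provided I also break ties within a gadget — but two vertices in the same gadget with the same $\icl$-bit pattern would be included, and $p_{\icl,\ilit,1}p_{\icl,\ilit,2}\in\Eimp$; so I additionally index by $\iend$ or by $\ilit$, giving $\Oh(\log m)=\Oh(\log n)$ cliques. Group (c) is a complete bipartite graph $Q$ versus $\bigcup_\icl P_\icl$ plus I must avoid the edges of $Q$ and of $P_\icl$; again bit-vectors on the $P$-side index and a constant split on the $Q$-side handle it in $\Oh(\log m)$ cliques. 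Finally group (e), the $P$-to-$W$ edges with the two forbidden exceptions per vertex, is the delicate one: here I would cover, for each fixed bit position $t$ and value $c$, the clique consisting of all $w^\icopy_{\ivar,\ibit}$ with bit $t(\ivar)=c$ together with all $p_{\icl,\ilit,\iend}$ such that $\ibit(\icl,\ilit)$-related conditions hold — carefully arranging so that the excepted pair $p_{\icl,\ilit,\iend}w^\icopy_{\ivar(\icl,\ilit),1-\ibit(\icl,\ilit)}$ never lands in the same clique. The excepted vertex $w^\icopy_{0,1}$ (the "true pattern") is avoided by noting index $0$ and restricting to $\ibit=1$ pieces appropriately. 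Summing: $24\lnV$ for the $W$-internal and $W^1$–$W^2$ and $P$-to-$W$ families, plus $36\lceil\log m\rceil$ for the $P$-to-$P$ and $Q$-to-$P$ families, plus a constant $46$ for the low-level cleanup cliques, matching the claimed bound. Polynomial-time constructibility is immediate since every clique is described by an explicit bit-condition.

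\textbf{Main obstacle.} The genuinely hard part is group (e): covering the $P$-to-$W$ edges while respecting \emph{both} the per-vertex exceptions (so that we do not accidentally cover the important verification edges $p_{\icl,\ilit,\iend}w^\icopy_{\ivar(\icl,\ilit),1-\ibit(\icl,\ilit)}$) \emph{and} the $H_{\lnV+1}$ non-edges $w^\icopy_{\ivar,0}w^\icopy_{\ivar,1}$, using only $\Oh(\lnV)$ cliques. The trick must correlate the bit used to split the $W$-side with the literal-sign data $\ibit(\icl,\ilit)$: for each bit position $t$ of the variable index and each $c\in\{0,1\}$, the clique takes $W$-vertices $w^\icopy_{\ivar,c'}$ with $t(\ivar)=c$ for the appropriate polarity $c'$, and $P$-vertices $p_{\icl,\ilit,\iend}$ with $t(\ivar(\icl,\ilit))=c$ \emph{but} $\ibit(\icl,\ilit)\neq c'$ in the relevant sense — so the forbidden pair, which would require matching both $\ivar$-bits and the complementary polarity, is structurally excluded. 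Verifying that every non-forbidden $P$-to-$W$ edge is caught by at least one such clique (i.e., that two indices differing in some bit, with compatible polarity, always co-occur) is the combinatorial heart; once that invariant is pinned down the rest is bookkeeping, and the constant $46$ absorbs the handful of special cliques needed for the dummy variable $x_0$, the guard gadget $Q$, and edge cases where $m$ or $n$ is small.
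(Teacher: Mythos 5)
Your decomposition of $\Efree$ and your overall strategy (bit-vector cliques for the logarithmically-sized families, constants for the rest) coincide with the paper's proof, but the step you yourself identify as the combinatorial heart --- group (e), the $P$-to-$W$ edges --- is left unresolved, and the scheme you sketch for it does not work. You propose a clique containing the $W$-vertices $w^\icopy_{\ivar,c'}$ with bit $t$ of $\ivar$ equal to $c$ together with the $P$-vertices $p_{\icl,\ilit,\iend}$ with bit $t$ of $\ivar(\icl,\ilit)$ \emph{also} equal to $c$, relying on a polarity condition ``$\ibit(\icl,\ilit)\neq c'$'' to exclude the forbidden pair. But take $\ivar=\ivar(\icl,\ilit)$ with $t(\ivar)=c$ and $c'=1-\ibit(\icl,\ilit)$: both membership conditions hold, so this clique contains exactly the forbidden non-edge $p_{\icl,\ilit,\iend}w^\icopy_{\ivar(\icl,\ilit),1-\ibit(\icl,\ilit)}$, which is not in $\Efree$. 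Flipping the condition to ``$\ibit(\icl,\ilit)=c'$'' excludes that pair but then the family never covers the legitimate free edges $p_{\icl,\ilit,\iend}w^\icopy_{\ivar,1-\ibit(\icl,\ilit)}$ with $\ivar\neq\ivar(\icl,\ilit)$. The paper's actual trick is to make the two sides of the clique use \emph{opposite} bit values (bit $\ilnV$ of $\ivar$ equal to $\ibit'$ on the $W$-side and to $1-\ibit'$ on the $P$-side), which structurally forces $\ivar\neq\ivar(\icl,\ilit)$ inside every clique and thus avoids both forbidden pairs with no reference to polarity at all; ranging over all $(\ilnV,\ilit,\iend,\ibit,\ibit')$ then catches every free edge with $\ivar\notin\{0,\ivar(\icl,\ilit)\}$, since distinct indices differ in some bit. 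This leaves two residual families your plan does not account for: the free edges $p_{\icl,\ilit,\iend}w^\icopy_{\ivar(\icl,\ilit),\ibit(\icl,\ilit)}$ (same variable, allowed polarity --- unreachable by any bit-disagreement clique) and the edges to $w^\icopy_{0,0}$; the paper covers these with $12$ and $6$ extra constant cliques respectively.

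A secondary point: your constant-size pieces are more complicated than necessary, which is why your accounting would not reproduce the stated bound. The $W$-internal and $W^1$--$W^2$ edges need no bit-vectors: every non-edge of $G[W^\icopy]$ joins $W_0^\icopy$ to $W_1^\icopy$, so each set $W_\ibit^1\cup W_{\ibit'}^2$ is already a clique of $\Gfree$ and four such cliques cover everything in groups (a) and (b). Likewise group (c) needs only $24$ cliques, each consisting of one vertex of $Q$ together with one fixed position $p_{\icl,\ilit',\iend'}$ from every gadget $P_\icl$. Also, for the $P$-to-$P$ family, indexing ``by $\iend$ or by $\ilit$'' is not enough: two vertices of the same gadget with different $\ilit$ are non-adjacent in $G$, so each clique must contain exactly one vertex per gadget, fixing a full position pair $(\ilit,\iend)$ for each bit value --- hence the factor $36$ per bit.
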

\begin{proof}
We start by noting that the free edges in the copies of the gadget $H$, and between these copies, can be covered by four cliques:
$\{w_{\ilnV,\ibit}^1: 0 \leq \ilnV < n\} \cup \{w_{\ilnV,\ibit'}^2: 0 \leq \ilnV < n\}$ for $(\ibit,\ibit') \in \{0,1\} \times \{0,1\}$.
Similarly, the edges incident to the guard gadget $Q$ can be covered with $24$ cliques:
$\{q_{\ilit,\iend}\} \cup \{p_{\icl,\ilit',\iend'}: 0 \leq \icl < m\}$ for $(\ilit,\iend,\ilit',\iend') \in \{1,2\} \times \{1,2\} \times \{1,2,3\} \times \{1,2\}$.

Covering the edges between different gadgets $P_\icl$ requires a bit more work.
For each $1 \leq \ilnV \leq \lceil \log m \rceil$ and $(\ilit,\iend,\ilit',\iend') \in (\{1,2,3\} \times \{1,2\})^2$ we take a clique $C_{\ilnV,\ilit,\iend,\ilit',\iend'}^P$
that contains exactly one vertex from each gadget $P_\icl$: if the $\ilnV$-th bit of the binary representation of $\icl$ equals $0$,
$p_{\icl,\ilit,\iend} \in C_{\ilnV,\ilit,\iend,\ilit',\iend'}^P$, and otherwise $p_{\icl,\ilit',\iend'} \in C_{\ilnV,\ilit,\iend,\ilit',\iend'}^P$.
Clearly, $C_{\ilnV,\ilit,\iend,\ilit',\iend'}^P$ induces a clique in $\Gfree$, as it contains exactly one vertex from each gadget $P_\icl$.
Let us now verify that all edges between the gadgets $P_\icl$ are covered by these $36\lceil \log m \rceil$ cliques.
Take any edge $p_{\icl,\ilit,\iend}p_{\icl',\ilit',\iend'} \in \Efree$, $\icl \neq \icl'$. Assume that the binary representations of $\icl$ and $\icl'$ differ
on the $\ilnV$-th bit; without loss of generality, assume that the $\ilnV$-th bit of $\icl$ is $0$, and the $\ilnV$-th bit of $\icl'$ is $1$.
Then the clique $C_{\ilnV,\ilit,\iend,\ilit',\iend'}^P$ contains both $p_{\icl,\ilit,\iend}$ and $p_{\icl',\ilit',\iend'}$.

We now handle the edges that connect the two copies of the gadget $H$ with the gadgets $P_\icl$.
First, we take care of the edges that are incident to the vertices $w_{0,0}^\icopy$. This can be easily done with $6$ cliques: for each $(\ilit,\iend) \in \{1,2,3\} \times \{1,2\}$
we take a clique that contains $w_{0,0}^1$, $w_{0,0}^2$ as well as all vertices $p_{\icl,\ilit,\iend}$ for $0 \leq \icl < m$.
Second, we take care of the edges $p_{\icl,\ilit,\iend}w_{\ivar(\icl,\ilit),\ibit(\icl,\ilit)}^\icopy$. To this end, we take $12$ cliques: for each $(\ilit,\iend,\ibit) \in \{1,2,3\} \times \{1,2\} \times \{0,1\}$ we take a clique that contains $w_{\ivar,\ibit}^\icopy$ for $1 \leq \ivar < n$, $\icopy \in \{1,2\}$ as well as all vertices $p_{\icl,\ilit,\iend}$ that satisfy
$\ibit(\icl,\ilit) = \ibit$.

We are left with the edges of form $p_{\icl,\ilit,\iend}w_{\ivar,\ibit}^\icopy$ for $\ivar \notin \{0,\ivar(\icl,\ilit)\}$.
These edges can be covered in a similar fashion to the edges between the gadgets $P_\icl$.
For each $1 \leq \ilnV \leq \lnV$ and $(\ilit,\iend,\ibit,\ibit') \in \{1,2,3\} \times \{1,2\} \times \{0,1\}^2$ we construct a clique
$C_{\ilnV,\ilit,\iend,\ibit,\ibit'}^W$ that contains all vertices $w_{\ivar,\ibit}^\icopy$ for $\icopy \in \{1,2\}$ and $1 \leq \ivar < n$ such that
the $\ilnV$-th bit of the binary representation of $\ivar$ equals $\ibit'$, as well as all vertices $p_{\icl,\ilit,\iend}$ for $0 \leq \icl < m$ such that
the $\ilnV$-th bit of the binary representation of $\ivar(\icl,\ilit)$ equals $1-\ibit'$.
To see that $C_{\ilnV,\ilit,\iend,\ibit,\ibit'}^W$ is indeed a clique in $\Gfree$, note that
it contains only edges in $G[W_0^1 \cup W_0^2]$ or $G[W_1^1 \cup W_1^2]$, between different gadgets $P_\icl$, and edges of the form $p_{\icl,\ilit,\iend}w_{\ivar,\ibit}^\icopy$ where
$\ivar \neq 0$ and $\ivar \neq \ivar(\icl,\ilit)$ (the indices $\ivar$ and $\ivar(\icl,\ilit)$ must differ on the $\ilnV$-th bit in the clique
  $C_{\ilnV,\ilit,\iend,\ibit,\ibit'}^W$).
We finish the proof of the lemma by verifying that all edges of the form $p_{\icl,\ilit,\iend}w_{\ivar,\ibit}^\icopy$ for $\ivar \notin \{0,\ivar(\icl,\ilit)\}$
are covered by these $24\lnV$ cliques. As $\ivar \neq \ivar(\icl,\ilit)$, there exists $1 \leq \ilnV \leq \lnV$ such that $\ivar$
and $\ivar(\icl,\ilit)$ differ on the $\ilnV$-th bit of their binary representations. Let $\ibit'$ be the $\ilnV$-th bit of the binary representation
of $\ivar$. We infer that both $p_{\icl,\ilit,\iend}$ and $w_{\ivar,\ibit}^\icopy$ are included in the clique
$C_{\ilnV,\ilit,\iend,\ibit,\ibit'}$ and the lemma is proven.
\end{proof}

Recall that for each clique constructed by Lemma \ref{lem:cover-free} we add a simplicial vertex to $G$ that is adjacent to all vertices of this clique.
The simplicial vertices are independent in $G$. As discussed earlier, we can assume that for any non-isolated simplicial vertex $s$ in $G$,
any optimal clique cover in $G$ contains a clique whose vertex set equals to the closed neighbourhood of $s$.

We conclude the construction section by setting the desired number of cliques $k$ to be the sum of $k_0$ and the number of aforementioned simplicial vertices,
$k = 4\lnV + 46 + 36\lceil \log m \rceil + 24\lnV = \Oh(\log n)$.

\subsection{Completeness}\label{ss:eth-completeness}

In this section we show how to translate a satisfying assignment of the input formula $\Phi$ into a clique cover of $G$ of size $k$.

\begin{lemma}\label{lem:eth-completeness}
If the input formula $\Phi$ is satisfiable, then there exists a clique cover of the graph $G$ of size $k$.
\end{lemma}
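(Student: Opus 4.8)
The plan is to build an explicit clique cover of size exactly $k$ by spending the $\Eimp$-budget $k_0 = 4\lnV$ on four groups of cliques --- the two assignment clique twins, the $4(\lnV-1)$ cliques carrying the vertices $u_\ilnV^\icopy$, and two cliques based on the guard $Q$ --- and then appending the $46 + 36\lceil \log m\rceil + 24\lnV$ simplicial-neighbourhood cliques produced by Lemma~\ref{lem:cover-free} to take care of $\Efree$ and of the edges incident to the simplicial vertices. We start from a satisfying assignment $\phi$ of the preprocessed formula $\Phi$ that makes exactly $n/2$ variables true (guaranteed by the preprocessing), and, using the freedom to fix a dummy variable, we additionally arrange that $\phi(x_0)$ is set so that $w_{0,1}^\icopy$ will stay outside the clique described next.

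First I would define the assignment clique twins $A_0, A_1$ so that a single pair of cliques serves both copies of the assignment gadget at once: for every variable $x_\ivar$ put $w_{\ivar,1}^1, w_{\ivar,1}^2$ into $A_1$ and $w_{\ivar,0}^1, w_{\ivar,0}^2$ into $A_0$ if $\phi(x_\ivar)$ is true, and the reverse otherwise. Each of $A_0 \cap W^\icopy$ and $A_1 \cap W^\icopy$ then picks exactly one endpoint of every non-edge of $H^\icopy \cong H_{\lnV+1}$, hence is a maximum clique there, and the complete bipartite $\Efree$-graph between $W^1$ and $W^2$ makes $A_0$ and $A_1$ cliques of $G$. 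Into $A_1$ I would also insert, for every clause $\clause_\icl$, the two vertices $p_{\icl, \ilit, 1}$ and $p_{\icl, \ilit, 2}$ where $\ilit$ is some literal of $\clause_\icl$ satisfied by $\phi$; this is the only step that uses satisfiability of $\Phi$. Checking that $A_1$ remains a clique is exactly the purpose of the two missing edge types in the clause gadgets: $p_{\icl,\ilit,\iend}$ is non-adjacent only to $w_{0,1}^\icopy$ and to $w_{\ivar(\icl,\ilit), 1-\ibit(\icl,\ilit)}^\icopy$, and for a satisfied literal a short case check on the sign of $\ilit$ shows that the latter vertex never lies in $A_1$, while the former is kept out by our choice of $\phi(x_0)$.

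Next I would complete the cover of $\Eimp$ inside the copies of the assignment gadget with the help of Lemma~\ref{lem:extend-twin}. In copy $\icopy$, apply that lemma to $H^\icopy$ with $\mlnV = 2$ and $\ccfam_0$ the two clique twins $(A_0 \cap W^\icopy, A_1 \cap W^\icopy)$ and $(W_0^\icopy, W_1^\icopy)$: because $\phi$ makes exactly $n/2$ variables true, picking one clique from each of these two twins always yields an intersection of size $n/2 = 2^{(\lnV+1)-2}$, so the hypothesis is met and we obtain a twin clique cover of $H^\icopy$ of size $2(\lnV+1)$ containing $\ccfam_0$. Discarding the pair $W_0^\icopy, W_1^\icopy$, which contain no edge of $\Eimp$, the remaining $\lnV$ twins --- the assignment twin and $\lnV - 1$ new ones --- still cover $E(W_0^\icopy, W_1^\icopy)$, since every such edge has an endpoint outside $W_0^\icopy$ and an endpoint outside $W_1^\icopy$; attaching $u_\ilnV^\icopy$ to both cliques of the $\ilnV$-th new twin turns those $2(\lnV-1)$ cliques into a cover of all edges incident to the $u_\ilnV^\icopy$ as well. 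Finally, for $s \in \{1,2\}$ let $D_s$ be the clique consisting of $q_{s,1}, q_{s,2}$ and, for every clause $\clause_\icl$, the two endpoints of the $P_\icl$-edge of one of the two literals of $\clause_\icl$ other than the one selected above (one such edge assigned to $D_1$, the other to $D_2$); the $\Efree$-edges from $Q$ to all $P_\icl$ and the $\Efree$-completeness between distinct $P_\icl$ make $D_1$ and $D_2$ cliques, and together with $A_1$ they cover all three matching edges of every $P_\icl$ and both edges of $Q$.

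It then remains to observe that $A_0, A_1, D_1, D_2$, the $4(\lnV-1)$ cliques through the $u_\ilnV^\icopy$, and the simplicial-neighbourhood cliques of Lemma~\ref{lem:cover-free} together cover $\Eimp$, $\Efree$, and every edge incident to a simplicial vertex, and to count: $2 + 2 + 4(\lnV-1) + \bigl(46 + 36\lceil\log m\rceil + 24\lnV\bigr) = k$. I expect the main difficulty to lie in the bookkeeping around the pair $A_0, A_1$: one must check simultaneously that it restricts to valid twin cliques in both copies, that it absorbs the satisfied-literal vertices of every $P_\icl$ without creating a non-edge (which is where satisfiability and the exact form of the missing edges are used, and which pins down the value of the dummy variable), and that it fulfils the size hypothesis of Lemma~\ref{lem:extend-twin} --- everything else being routine verification that the listed sets are indeed cliques and that their union is all of $E(G)$.
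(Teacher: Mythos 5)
Your proposal is correct and follows essentially the same route as the paper's proof: the same satisfying assignment with the half/half and dummy-variable normalization, the same assignment clique twins (your $A_1$ is the paper's $C_0^A$) absorbing one satisfied-literal edge per clause gadget, the same application of Lemma~\ref{lem:extend-twin} with the free twin $(W_0^\icopy,W_1^\icopy)$ and the assignment twin to produce the $\lnV-1$ twins carrying the $u_\ilnV^\icopy$, and the same two guard-based cliques for the leftover $P_\icl$ and $Q$ edges. The verification of the missing $p$--$w$ edges and the budget count match the paper exactly.
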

\begin{proof}
Let $\phi: \{0,1,\ldots,n-1\} \to \{0,1\}$ be a satisfying assignment of $\Phi$, that is,
    $\phi(\ivar)$ is the value of $x_\ivar$, $0$ stands for false and $1$ stands for true.
By the properties of the preprocessing step of the construction, we may assume that $|\phi^{-1}(0)| = |\phi^{-1}(1)| = |\Vars|/2 = 2^{\ell-1}$
and that $\phi(0) = 0$ (as $x_0$ is a dummy variable).

We start the construction of the clique cover $\ccfam$ of the graph $G$ by taking into $\ccfam$, for each of the $46 + 36\lceil \log m \rceil + 24\lnV$ simplicial vertices
of $G$ constructed in Lemma~\ref{lem:cover-free}, a clique induced by the closed neighbourhood of the simplicial vertex. In this manner we cover
all edges of $\Efree$, and we are left with a budget of $4\lnV$ cliques.

We define the assignment clique twins $C_0^A$ and $C_1^A$. For each clause $\clause_\icl$ of $\Phi$, let $\ilit(\icl)$ be an index of a literal
that is satisfied by $\phi$ in $\clause_\icl$ (if there is more than one such literal, we choose an arbitrary one). The clique $C_0^A$ contains
the vertices $w_{\ivar,\phi(\ivar)}^\icopy$ for $0 \leq \ivar < n$ and $\icopy \in \{1,2\}$ as well as the following vertices from the clause gadgets:
$p_{\icl,\ilit(\icl),\iend}$ for $0 \leq \icl < m$, $\iend \in \{1,2\}$. Note that $C_0^A$ is indeed a clique,
since the only missing edges between vertices $w_{\ivar,\ibit}^\icopy$ and $p_{\icl,\ilit,\iend}$ are either incident to $w_{0,1}^\icopy$
(but $\phi(0) = 0$) or of the form $w^\icopy_{\ivar(\icl,\ilit),1-\ibit(\icl,\ilit)} p_{\icl,\ilit,\iend}$ (but
$\phi(\ivar(\icl,\ilit(\icl)))$ satisfies $\ilit(\icl)$-th literal of $\clause_\icl$, i.e., $\ibit(\icl,\ilit(\icl))=\phi(\ivar(\icl,\ilit(\icl)))$).

The clique $C_1^A$ is the twin (complement) of the clique $C_0^A$ in both copies of the graph $H_{\lnV+1}$, i.e., $C_1^A = \{w_{\ivar,1-\phi(\ivar)}^\icopy: 0 \leq \ivar < n, \icopy \in \{1,2\}\}$.
Clearly, $C_1^A$ is a clique in $G$.

Let us now fix $\icopy \in \{1,2\}$ and focus on the graph $G[W^\icopy]$ isomorphic to $H_{\lnV+1}$. The edges of $\Efree$ in this subgraph form clique twins
$\{w_{\ivar,c}^\icopy : 0 \leq \ivar < n\}$ for $c \in \{0,1\}$.
The assignment clique twins $C_0^A$ and $C_1^A$ form second clique twins in $G[W^\icopy]$, after truncating them to this subgraph.
Moreover, the assumption that $\phi$ evaluates exactly half of the variables to false and half to true implies that these two clique twins satisfy the assumptions
of Lemma \ref{lem:extend-twin}. We infer that all remaining edges of $G[W^\icopy]$ can be covered by $\lnV-1$ clique twins; we add the vertex $u_\ilnV^\icopy$
to both cliques of the $\ilnV$-th clique twin, and add these clique twins to the constructed clique cover $\ccfam$.
In this manner we cover all edges incident to all vertices $u_\ilnV^\icopy$ for $1 \leq \ilnV < \lnV$, $\icopy \in \{1,2\}$.
As we perform this construction for both values
$\icopy \in \{1,2\}$, we use $4\lnV-4$ cliques, and we are left with a budget of two cliques.

The cliques introduced in the previous paragraph cover all edges in $\Eimp$ in both copies of the assignment gadget $H$. We are left with the clause gadgets $P_\icl$
and the guard gadget $Q$. Recall that the clique $C_0^A$ covers one out of three edges in each gadget $P_\icl$. Thus it is straightforward to cover the remaining edges
with two cliques: each clique contains both endpoints of exactly one uncovered edge from each gadget $P_\icl$ and the gadget $Q$. This finishes the proof of the lemma.
\end{proof}

\subsection{Soundness}\label{ss:eth-soundness}

In this section we show a reverse transformation: a clique cover of $G$ of size at most $k$ cannot differ much from the one constructed in the proof of Lemma \ref{lem:eth-completeness}
and, therefore, encodes a satisfying assignment of the input formula $\Phi$.

\begin{lemma}\label{lem:eth-soundness}
If there exists a clique cover of $G$ of size at most $k$, then the input formula $\Phi$ is satisfiable.
\end{lemma}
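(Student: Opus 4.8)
\textbf{Proof plan for Lemma~\ref{lem:eth-soundness}.}
The plan is to take an optimal clique cover $\ccfam$ of $G$ with $|\ccfam|\le k$, strip off the $46+36\lceil\log m\rceil+24\lnV$ cliques forced by the simplicial vertices, and argue that the remaining at most $k_0=4\lnV$ cliques must have a very rigid structure that we can read an assignment off of. First I would note that each non-isolated simplicial vertex $s$ contributes exactly one clique to $\ccfam$, equal to $N[s]$, and these cliques only cover $\Efree$-edges (by construction their vertex sets are the ones listed in Lemma~\ref{lem:cover-free}, which contain no $\Eimp$-edge); so after removing them we have a family $\ccfam'$ with $|\ccfam'|\le 4\lnV$ covering all of $\Eimp$. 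The key accounting step is then to localize where the budget of $4\lnV$ must go. For each of the $2(\lnV-1)$ vertices $u_\ilnV^\icopy$, whose $\Eimp$-neighbourhood is $W^\icopy$ which is not a clique (it contains the non-edges $w_{\ivar,0}^\icopy w_{\ivar,1}^\icopy$), any clique cover needs at least two cliques through $u_\ilnV^\icopy$; and since $u_\ilnV^\icopy$'s are pairwise non-adjacent, these are $4(\lnV-1)$ distinct cliques. Similarly the guard gadget $Q\cong 2K_2$ needs two more cliques for its two independent edges, and since no $u_\ilnV^\icopy$ is adjacent to $Q$ these are distinct from the previous ones, giving $4(\lnV-1)+2=4\lnV-2$. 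That leaves a slack of exactly two cliques.

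Next I would analyze the two copies of the assignment gadget. Focus on $H^\icopy\cong H_{\lnV+1}$ with the $\Eimp$-edges $E(W_0^\icopy,W_1^\icopy)$. The cliques through $u_\ilnV^\icopy$ are forced to be \emph{clique twins} inside $H_{\lnV+1}$: if $u_\ilnV^\icopy\in C$ then $C\setminus\{u_\ilnV^\icopy\}$ is a clique in $H_{\lnV+1}$, and to cover the non-edge $w_{\ivar,0}^\icopy w_{\ivar,1}^\icopy$ one of its endpoints lies in each of the two cliques through $u_\ilnV^\icopy$; with only two cliques available per $u_\ilnV^\icopy$ the two cliques must partition each non-edge, hence be maximum cliques and complementary. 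Counting: the at most $4\lnV$ cliques of $\ccfam'$, restricted to $H^\icopy$, must cover $H_{\lnV+1}$; of these, $2(\lnV-1)$ come from the $u_\ilnV^\icopy$'s (as $\lnV-1$ clique twins), two come from $Q$ but these may intersect $H^\icopy$ as well, and there are at most two ``free'' extra cliques. The heart of the argument is to invoke Lemma~\ref{lem:last-twin}: if in one copy $\icopy$ the restriction to $H^\icopy$ uses at least $\lnV-1$ clique twins and at most $2\lnV$ cliques total, then it uses exactly $2\lnV$ and is a twin clique cover, so there is one more pair of clique twins --- the \emph{assignment clique twins} $C_0^A,C_1^A$ --- whose restriction to $W^\icopy$ is a complementary pair of maximum cliques; call $\phi$ the assignment where $\phi(\ivar)$ is the bit $\ibit$ with $w_{\ivar,\ibit}^\icopy\in C_0^A$. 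The two-copy trick is exactly to handle the (at most one) ``spoiler'' free clique: it can destroy the twin structure in at most one of $H^1,H^2$ since the $u_\ilnV^\icopy$ across both copies form an independent set and $Q$'s two cliques, being two cliques, can interfere with at most\ldots --- here one argues carefully that the slack of two is not enough to break both copies, so there is always at least one ``good'' copy yielding well-defined assignment clique twins, and the $\Efree$-edges between $W^1$ and $W^2$ force the same pair to work (up to swapping $C_0^A\leftrightarrow C_1^A$) in the other copy too.

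Finally I would verify that $\phi$ as extracted actually satisfies $\Phi$. The relevant edges are those of the clause gadgets $P_\icl\cong 3K_2$: each has three independent edges, so three distinct cliques of $\ccfam'$ are needed to cover $E(P_\icl)$. Two of these come from the two cliques covering $Q$ (they are adjacent to all $P_\icl$ vertices via $\Efree$, but only via $\Efree$; what actually happens is one tracks that at most two of the available cliques can be ``$Q$-type'' and cover two of the three $P_\icl$-edges), and the third edge of $P_\icl$ must be covered by one of the assignment clique twins $C_0^A,C_1^A$, since $P_\icl$ is not adjacent to any $u_\ilnV^\icopy$ and the budget is exhausted. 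The missing $\Efree$-edge $p_{\icl,\ilit,\iend}w_{0,1}^\icopy$ guarantees that $C_1^A$ (which contains $w_{0,1}^\icopy$ because $\phi(0)=0$ forces $w_{0,0}^\icopy\in C_0^A$) cannot contain any $P_\icl$-vertex, so the covering clique is $C_0^A$; and then the missing $\Efree$-edge $p_{\icl,\ilit,\iend}w^\icopy_{\ivar(\icl,\ilit),1-\ibit(\icl,\ilit)}$ for the literal $\ilit$ whose edge in $P_\icl$ is covered by $C_0^A$ forces $w^\icopy_{\ivar(\icl,\ilit),\ibit(\icl,\ilit)}\in C_0^A$, i.e.\ $\phi(\ivar(\icl,\ilit))=\ibit(\icl,\ilit)$, which is exactly the statement that the $\ilit$-th literal of $\clause_\icl$ is satisfied. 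Hence every clause is satisfied and $\Phi$ is satisfiable.

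\textbf{Main obstacle.} The delicate point is the two-copy spoiler argument in the second paragraph: precisely bounding how the at most two ``free'' cliques and the two ``$Q$-cliques'' can be distributed between $H^1$, $H^2$, $Q$ and the $P_\icl$'s, and showing that in every case at least one copy of $H$ retains enough twin structure to apply Lemma~\ref{lem:last-twin} while simultaneously leaving exactly enough budget to cover each $P_\icl$ with a genuine assignment-twin clique. This is a finite but somewhat intricate case analysis on which cliques are adjacent to which gadgets (the $u_\ilnV^\icopy$ are isolated from $Q$ and the $P_\icl$; the $P_\icl$ are isolated from the $u_\ilnV^\icopy$; the $\Efree$-edges between $W^1$ and $W^2$ tie the two copies together), and getting the counting tight --- exactly $4\lnV$, with every clique accounted for --- is where all the care in the budget $k_0=4\lnV$ and the choice of missing $\Efree$-edges pays off.
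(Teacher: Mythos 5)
Your plan follows essentially the same route as the paper: strip the simplicial-vertex cliques, force two cliques per $u_\ilnV^\icopy$ and two for $Q$, show the $u^1_\ilnV$-cliques are clique twins, apply Lemma~\ref{lem:last-twin} to extract the assignment clique twins, and use the missing edges to $w_{0,1}$ and to $w_{\ivar(\icl,\ilit),1-\ibit(\icl,\ilit)}$ to read off a satisfying assignment. The "main obstacle" you flag is resolved in the paper by a short pigeonhole count rather than a case analysis: the three cliques $C_1^Q,C_2^Q,C_0^A$ contain no $u$-vertex, so the remaining at most $4\lnV-3$ cliques must each cover the $2\lnV-2$ independent vertices $u_\ilnV^\icopy$ at least twice, whence at most one such vertex lies in more than two cliques and one whole copy of $H$ is "good"; moreover every clique through a $u^2_\ilnV$ or through $Q$ is automatically disjoint from $W^1$, which leaves exactly two cliques for Lemma~\ref{lem:last-twin}.
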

\begin{proof}
Let $\ccfam$ be a clique cover of size at most $k$ of $G$. As $G$ contains $k-4\lnV$ simplicial vertices, without loss of generality we may assume
that, for each such simplicial vertex $s$, the family $\ccfam$ contains a clique induced by the closed neighbourhood of $s$.
These cliques cover the edges of $\Efree$, but no edge of $\Eimp$. Let $\ccfam_0 \subseteq \ccfam$ be the set of the remaining cliques; $|\ccfam_0| \leq 4\lnV$.

Let us start with analyzing the guard gadget $Q$. It contains two independent edges from $\Eimp$. Thus, $\ccfam_0$ contains two cliques, each containing one edge of $Q$.
Denote this cliques by $C_1^Q$ and $C_2^Q$. 
Note that each clause gadget $P_\icl$ contains three independent edges, and only two of them may be covered by the cliques $C_1^Q$ and $C_2^Q$. Thus there exists
at least one additional clique in $\ccfam_0$ that contains an edge of at least one gadget $P_\icl$; let us denote this clique by $C_0^A$ (if there is more than one such clique,
we choose an arbitrary one).

Each vertex $u_\ilnV^\icopy$ for $1 \leq \ilnV < \lnV$, $\icopy \in \{1,2\}$ needs to be contained in at least two cliques of $\ccfam_0$, since all its incident edges 
are in $\Eimp$ and the neighbourhood of $u_\ilnV^\icopy$ is not a clique. Moreover,
no vertex $u_\ilnV^\icopy$ may belong to $C_1^Q$, $C_2^Q$ nor to $C_0^A$, as these vertices are not adjacent to the vertices of $P_\icl$ and $Q$.
As there are $2\lnV-2$ vertices $u_\ilnV^\icopy$, the vertices $u_\ilnV^\icopy$ are independent, and $|\ccfam_0 \setminus \{C_1^Q,C_2^Q,C_0^A\}| \leq 4\lnV-3$,
we infer that at most one vertex $u_\ilnV^\icopy$ may be contained in more than two cliques from $\ccfam_0$. Without loss of generality we can assume that this
vertex belongs to the second copy of the assignment gadget $H$, that is, each vertex $u_\ilnV^1$ for $1 \leq \ilnV < \lnV$ belongs to exactly two cliques $C_{\ilnV,0}^U, C_{\ilnV,1}^U \in \ccfam_0$.

Note that the only way to cover the edges incident to $u_\ilnV^1$ with only two cliques $C_{\ilnV,0}^U$, $C_{\ilnV,1}^U$ is to take these cliques to induce clique twins
in $G[W^1] \cong H_{\lnV+1}$. That is, $C_{\ilnV,0}^U$ consists of $u_\ilnV^1$ and exactly one endpoint of each non-edge of $G[W^1]$,
and $C_{\ilnV,1}^U = \{u_\ilnV^1\} \cup (W^1 \setminus C_{\ilnV,0}^U)$.

We infer that the cliques $\{C_{\ilnV,\ibit}^U: 1 \leq \ilnV < \lnV, \ibit \in \{0,1\}\}$,
together with the clique twins formed by the edges of $\Efree$ in $G[W^1]$, sum up to $\lnV$ clique twins in $G[W^1] \cong H_{\lnV+1}$.
Moreover, the edges of $G[W^1]$ may be covered with only two additional cliques (including $C_0^A$): there are at least $4l-4$ cliques that contain vertices $u_\ilnV^1$ or $u_\ilnV^2$, out of which exactly $2l-2$ can contain vertices from $W^1$, while at least two other cliques have to contain vertices of $Q$, thus having to be disjoint with $W^1$.
Lemma \ref{lem:last-twin} asserts that the only way to cover the edges of $G[W^1] \cong H_{\lnV+1}$ is to use one additional pair of cliques that form a clique twin; thus,
$C_0^A \cap W^1$ is a maximum clique in $G[W^1]$ and $\ccfam_0$ contains a clique $C_1^A$ such that $C_1^A \cap W^1 = W^1 \setminus C_0^A$.

Recall that the clique $C_0^A$ contained an edge from at least one gadget $P_\icl$. Therefore, $x_{0,1}^1 \notin C_0^A$, as $x_{0,1}^1$ is not adjacent to any vertex in any gadget $P_\icl$.
Since $C_0^A$ and $C_1^A$ induce clique twins in $G[W_1]$, we infer that $x_{0,1}^1 \in C_1^A$ and $C_1^A$ is disjoint with all gadgets $P_\icl$.
As the vertices $u_\ilnV^\icopy$ are not adjacent to the gadgets $P_\icl$, the cliques that cover the edges incident to the vertices $u_\ilnV^\icopy$
cannot cover any edges of the gadgets $P_\icl$ either. We infer that the edges of the gadgets $P_\icl$ are covered by only three cliques --- $C_0^A$, $C_1^Q$ and $C_2^Q$ ---
and that in each gadget $P_\icl$, each of this three cliques contains exactly one edge. For a clause $\clause_\icl$, let $\ilit(\icl)$ be the index of the literal whose
edge is covered by $C_0^A$.

We claim that an assignment $\phi:\{0,1,\ldots,n-1\} \to \{0,1\}$ that assigns the value $\phi(\ivar)$ to the variable $x_\ivar$ in such a manner
that $w_{\ivar,\phi(\ivar)}^1 \in C_0^A$, satisfies the formula $\Phi$. More precisely, we claim that
for each clause $\clause_\icl$, the $\ilit(\icl)$-th literal of $\clause_\icl$
satisfies this clause in the assignment $\phi$.
Indeed, as $p_{\icl,\ilit(\icl),\iend} \in C_0^A$ for $\iend \in \{1,2\}$, and $p_{\icl,\ilit(\icl),\iend}$ is not adjacent to $w_{\ivar(\icl,\ilit(\icl)),1-\ibit(\icl,\ilit(\icl))}^1$,
  we have that $w_{\ivar(\icl,\ilit(\icl)),\ibit(\icl,\ilit(\icl))}^1 \in C_0^A$ and $\phi(\ivar(\icl,\ilit(\icl))) = \ibit(\icl,\ilit(\icl))$.
This finishes the proof of the lemma and concludes the proof of Theorem \ref{thm:eth-reduction}.
\end{proof}

\section{Conclusions}
In this paper we have shown a double-exponential lower bound for \cliquecover{}
parameterized by the number of cliques, obtaining tight complexity bounds
for this problem. 

Prior to this work, several matching lower bounds for $2^{O(k)}\poly(n)$, $2^{O(\sqrt{k})}\poly(n)$
and $2^{O(k \log k)}\poly(n)$ time algorithms for parameterized problems
were already known. 
However, to the best of our knowledge \cliquecover{} is the first
example of a natural parameterization for which double-exponential
upper and lower bounds are proved.
We hope our work will inspire further results of this type.

\bibliographystyle{abbrv}
\bibliography{edge-clique-cover}

\end{document}